\documentclass[11pt]{article}
\usepackage[margin=1in]{geometry}

\usepackage[utf8]{inputenc}					
\usepackage{csquotes}
\usepackage{amsfonts,amsmath,amssymb,dsfont,mathtools}
\usepackage[hypertexnames=false]{hyperref}
\usepackage{amsthm,comment}
\usepackage[inline, shortlabels]{enumitem}

\usepackage[dvipsnames]{xcolor}

\usepackage{tikz}
\usepackage{appendix}

\date{}

\usepackage{bbold}
\usepackage{mathtools}

\usepackage{algorithm2e}

\usepackage{hyperref}						
\hypersetup{
	hidelinks,
    colorlinks = true,
    linkcolor = MidnightBlue,
    citecolor = MidnightBlue
}

\usepackage[noabbrev,capitalize,nameinlink]{cleveref}			

\makeatletter
\newtheorem*{rep@theorem}{\rep@title}
\newcommand{\newreptheorem}[2]{%
\newenvironment{rep#1}[1]{%
 \def\rep@title{#2 \ref{##1} (restated)}%
 \begin{rep@theorem}}%
 {\end{rep@theorem}}}
\makeatother

\newtheorem{theorem}{Theorem}[section]
\newtheorem{lemma}[theorem]{Lemma}
\newtheorem{claim}{Claim}[theorem]

\newtheorem{observation}[theorem]{Observation}
\theoremstyle{definition}
\newtheorem{definition}[theorem]{Definition}

\crefname{claim}{Claim}{Claims}
\crefname{observation}{Observation}{Observations}
\crefname{equation}{Eq.}{Eqs.}

\newcommand{\triangleqed}{\renewcommand{\qedsymbol}{\ensuremath{\triangle}}}

\DeclareMathOperator{\bbN}{\mathbb{N}}

\DeclareMathOperator{\cO}{\mathcal{O}}

\newcommand{\argmax}{\textrm{argmax}}

\renewcommand{\Pr}{\mathbb{P}}
\newcommand{\E}{{\rm I\kern-.3em E}}

\newcommand{\bit}{\mathrm{bit}}
\newcommand{\carry}{\mathrm{carry}}
\newcommand{\States}{\mathcal{S}}
\newcommand{\State}[2][]{#2.\mathrm{state}_{#1}}

\newcommand{\subc}{\texttt{Succ}}

\newcommand{\Next}{\texttt{Next}}

\newcommand{\Value}{\mathrm{Value}}
\newcommand{\Trains}{\mathrm{Trains}}
\newcommand{\Add}{\texttt{Add}}
\newcommand{\WagonUpdate}{\texttt{Wagon-Update}}

\newcommand{\NewLeader}{\texttt{New-Leader}}

\newcommand{\WagonUpdateLeader}{\texttt{Wagon-Creation}}
\newcommand{\IsKilled}{\texttt{Is-Eliminated}}

\newcommand{\id}{\mathrm{idx}}

\newcommand{\leader}{\mathrm{leader}}

\newcommand{\flag}{\mathrm{flag}}
\newcommand{\rand}{\mathrm{rand}}

\newcommand{\SuccIsMarked}{\texttt{SuccIsMarked}}

\newcommand{\err}{\texttt{Err}}
\newcommand{\ErrSuccessor}{\texttt{Err-Successor}}
\newcommand{\ErrOverflow}{\texttt{Err-Overflow}}

\newcommand{\legitimate}{\Gamma_{L}} 
\newcommand{\execution}{(\gamma_t)_{t \in \bbN}}

\begin{document}

\title{Informative Trains: A Memory-Efficient Journey to a Self-Stabilizing Leader Election Algorithm in Anonymous Graphs}

\author{Lelia Blin\thanks{Université Paris Cité, CNRS, IRIF, Paris, France. E-mail: \texttt{lelia.blin@irif.fr}} \and Sylvain Gay\thanks{IRIF and École Normale Supérieure, Paris, France. E-mail: \texttt{sylvain.gay@ens.psl.eu}}
 \and Isabella Ziccardi\thanks{CNRS, Université Paris Cité, IRIF, Paris, France. E-Mail: \texttt{isabella.ziccardi@irif.fr}. Research supported in
part by the French PEPR integrated project EPiQ (ANR-22-PETQ-0007).}
}

\maketitle

\begin{abstract}
We study the self-stabilizing leader election problem in anonymous $n$-nodes networks. Achieving self-stabilization with low space memory complexity is particularly challenging, and designing space-optimal leader election algorithms remains an open problem for general graphs. In deterministic settings, it is known that $\Omega(\log \log n)$ bits of memory per node are necessary [Blin et al., Disc. Math. \& Theor. Comput. Sci., 2023], while in probabilistic settings the same lower bound holds for some values of \(n\), but only for an unfair scheduler [Beauquier et al., PODC 1999]. 
Several deterministic and probabilistic protocols have been proposed in models ranging from the state model to the population protocols.
However, to the best of our knowledge, existing solutions either require $\Omega(\log n)$ bits of memory per node for general worst case graphs, or achieve low state complexity only under restricted network topologies such as rings, trees, or bounded-degree graphs.

In this paper, we present a probabilistic self-stabilizing leader election algorithm for arbitrary anonymous networks that uses $\cO(\log \log n)$ bits of memory  per node.
Our algorithm operates in the state model under a synchronous scheduler and assumes knowledge of a global parameter $N = \Theta(\log n)$. We show that, under our protocol, the system converges almost surely to a stable configuration with a unique leader and stabilizes within $\cO(\mathrm{poly}(n))$ rounds with high probability.
To achieve $\cO(\log \log n)$ bits of memory, our algorithm keeps transmitting information after convergence, i.e. it does not verify the silence property. Moreover, like most works in the field, our algorithm does not provide explicit termination detection (i.e., nodes do not detect when the algorithm has converged).

Our algorithm is inspired by the uniform constant-state leader election protocol in [Vacus and Ziccardi, PODC 2025] for the beeping model (without self-stabilization). A novel component of our approach is the introduction of an information train mechanism that propagates $\cO(\log n)$-bit information across the network while requiring only $\cO(\log \log n)$ local memory per node. We believe this mechanism may be of independent interest for the design of space-efficient self-stabilizing distributed algorithms.
\end{abstract}

\section{Introduction}

Self-stabilization \cite{Dijkstra74}
represents a comprehensive framework for endowing networks with autonomous recovery properties after transient faults. Formally, a self-stabilizing protocol guarantees convergence to correct behavior from any arbitrary initial state, without requiring external intervention or reinitialization. Among the core primitives of distributed systems, leader election holds particular significance, as it establishes a distinguished process capable of coordinating system-wide activities. The interplay between self-stabilization and leader election is especially critical: numerous distributed protocols presuppose the existence of a unique leader throughout the execution, including in the presence of transient failures. Consequently, a self-stabilizing leader election protocol enables the deployment of such algorithms in systems lacking a predetermined coordinator, leveraging well-established composition techniques that preserve stabilization properties.

Self-stabilizing leader election (SSLE) is a fundamental problem in distributed computing, and has been extensively studied in both anonymous and ID-based networks. 
In anonymous networks, randomization is essential, as deterministic symmetry breaking is impossible~\cite{angluin1980local}. 

Two key performance measures are typically considered in the design of self-stabilizing algorithms: the memory used by each node, and the time required to converge to a stable configuration. 
Memory efficiency directly impacts the amount of information that nodes need to exchange with their neighbors during stabilization. 
Algorithms with smaller state spaces not only reduce communication overhead in fault-free executions or after stabilization, but also facilitate the integration of self-stabilization with other distributed functionalities such as replication.

Designing SSLE algorithms with optimal space complexity is particularly challenging. 
These algorithms have been studied under various distributed computing models, including population protocols~\cite{chen2020self,AustinBFGH25,gasieniec2025improving}, stone-age models~\cite{EmekK21}, and the classical state model~\cite{BeauquierGJ07,Itkis1995fast,BlinT20}. 
However, achieving optimal space complexity for arbitrary graphs remains an open problem in both deterministic and randomized settings, for both anonymous networks and networks with unique IDs. In fact, all known algorithms in general $n$-nodes graphs require $\Omega(\log n)$ bits of memory per node in worst-case topologies \cite{AustinBFGH25,EmekK21,BlinT20}.
To date, space-optimal solutions are known only for specific graph families, such as rings~\cite{BeauquierGJ07}, bounded-degree graphs~\cite{chen2020self}, and bounded-diameter graphs~\cite{EmekK21}.

From the lower-bound perspective, it is known that any deterministic leader election algorithm requires at least $\Omega(\log \log n)$ bits of memory per node~\cite{BlinFB23} in an $n$-node ring. 
Regarding probabilistic algorithms, in~\cite{BeauquierGJ99} they prove a lower bound of $\Omega(\log \log n)$ in rings, for some values of $n$ with an unfair scheduler.

In this paper, we present the first self-stabilizing leader election algorithm for general connected $n$-node graphs that uses sublogarithmic memory. 
Our algorithm is randomized and works in anonymous networks with a synchronized scheduler. It can be implemented in the state model,  requires only $\cO(\log \log n)$ bits of memory per node for \emph{any} graph and converges in polynomial time in $n$. In particular, we assume that nodes have knowledge of a common parameter $N = \Theta(\log n)$, and when this approximation of $n$ is sufficiently accurate, we can guarantee convergence in $\cO(n^3 \log n)$ rounds. 
The algorithm is non-silent: nodes continue to change their states after convergence. This is necessary to achieve $\cO(\log \log n)$ memory, as silent self-stabilizing leader election requires at least $\Omega(\log n)$ bits per node~\cite{DolevGS99}. We do not claim space optimality, since the only known lower bound of $\Omega(\log \log n)$ bits applies to deterministic algorithms~\cite{BlinFB23} or to randomized algorithms with unfair daemons~\cite{BeauquierGJ99}.

\subsection{Model and Problem Definition}

In this section, we define the distributed model considered in this paper and formalize the notion of self-stabilization. 

\paragraph{The Model} A \textit{distributed system} is modeled as an undirected connected graph, $G=(V,E)$, where $V$ is a set of $n$ nodes, and $E$ is the set of edges connecting them. Nodes represent processes, and edges represent bidirectional communication links, allowing two adjacent nodes to communicate directly.

We consider the \emph{state model} \cite{Dijkstra74} in the context of \emph{uniform anonymous networks}. 
In this model, the communication paradigm allows each node to read in an atomic step its own state as well as the state of its neighbors, and to write its own state.
A uniform anonymous network is characterized by the absence of node identifiers and the absence of any distinguished node. We deal with a uniform protocol, meaning all nodes adhere to the same protocol.

Traditionally, algorithms in the state model are described using a set of rules, which are applied to a node's state under certain conditions. We use a different but equivalent formalism. 
A protocol is defined by a finite set of \emph{states} $\mathcal{S}$ (a state can be seen as a collection of variables) and a state-transition function $f: \mathcal{S} \times 2^{\mathcal{S}} \rightarrow \mathcal{S}$, where the first argument represents the state of a given node, and the second argument represents the (unordered) set of states observed at neighboring nodes.
Given its current state and the states of its neighbors, the state-transition function determines the new state a node adopts.
In this paper, we assume that the state-transition function  $f$ is \emph{random}, meaning that it takes as input an additional source of randomness, inducing  a probability distribution over the next state. Moreover, we assume that the nodes have the common knowledge of a parameter $N$, such that $N = \Omega(\log n)$, and that the set of states $\mathcal{S}$ and the state-transition function $f$ depend on $N$. 
The \emph{configuration} of a system is the product of the states of all nodes, and we denote with $\Gamma$ the set of all possible configurations of the system.

In this paper, we assume that each transition between configurations is driven by a \emph{synchronized scheduler}. This means that, in each transition step (called \emph{round}), all the nodes execute the state-transition function simultaneously and atomically.  That means that the evaluation of the state-transition function $f$ and the update of the states are done in one atomic step.

\paragraph{Self-stabilization} 
Let $(\mathcal S,f)$ be a randomized protocol executed under the synchronous scheduler, and let $P$ be a predicate on configurations.  
We say that $(\mathcal S,f)$ \emph{self-stabilizes to $P$} if the following two properties hold:

\begin{enumerate}
    \item \textit{Closure.}  
    The predicate $P$ is closed under the transition relation induced by $f$:  
    for every configuration $\gamma$ satisfying $P$, and every configuration $\gamma'$ such that \(\Pr[\gamma \rightarrow \gamma'] > 0\),
    it holds that $\gamma'$ also satisfies $P$.

    \item \textit{Convergence.}  
    For every initial configuration $\gamma_0$, with probability \(1\) there exists $\tau$ such that the configuration \(\gamma_\tau\) obtained after $\tau$ synchronous rounds satisfies $P$.
\end{enumerate}

Configurations \(\gamma\) satisfying $P$ are called legitimate configurations.

\paragraph{Silent self-stabilizing algorithms.}
A self-stabilizing algorithm is said to be \emph{silent} if each of its executions reaches a point in time after which the states of nodes do not change. 
A foundational result regarding space complexity in the context of self-stabilizing silent algorithms has been proved by Dolev et al. \cite{DolevGS99}, which shows that in
$n$-node networks, $\Omega(\log n)$ bits of memory per node are required for silently solving leader election.

\paragraph{State space and Memory Requirements}
One performance measure of a self-stabilizing algorithm is its \emph{state space} $|\mathcal{S}|$.  
An algorithm with $|\mathcal{S}|$ states requires $\Theta(\log |\mathcal{S}|)$ bits of memory per node.

\paragraph{Randomized convergence time.}
As discussed above, randomization is modeled by allowing the state-transition
function $f$ to depend on random bits. Formally, $f$ can be viewed as a mapping
from $\mathcal{S} \times 2^{\mathcal{S}}$ together with an additional source of
randomness, which induces a probability distribution over successor states.
In our algorithm, randomness is supplied locally at each execution round: every
node has access to a constant number of fresh random bits at each step.
Consequently, the convergence time $\tau$ for a fixed initial configuration \(\gamma_0\)
is not deterministic but a random variable defined over executions.

We consider the convergence of \emph{Las Vegas} type: starting from any initial
configuration, the system reaches a configuration satisfying $P$ with
probability~1, while the convergence time $\tau$ is a random variable. By
closure of $P$, once such a configuration is reached, all subsequent
configurations also satisfy $P$ with probability~1.  
We say that convergence occurs within time $T$ \emph{with high probability (w.h.p.)} 
if the random convergence time $\tau$ satisfies \(Pr[\tau \leq T] \geq 1 - n^{-c}\) for some constant $c = \Omega(1)$.

\paragraph{Leader Election Problem}

We study the \emph{eventual leader election} problem, i.e., the leader election problem without the additional silent property. Once a leader is elected, nodes may continue updating their states and communicating, but a unique leader is always maintained. A formal definition follows.

\begin{definition}[Eventual Leader Election]
An algorithm $(\States,f)$ is said to solve \emph{eventual leader election} in time $T\in\bbN$ if there exists a distinguished subset of states $L \subseteq \States$ such that, for all execution, there exists a node $v^* \in V$ such that, for every round $t \geq T$, $v^*$ is the unique node whose state belongs to $L$ at time $t$.  
\end{definition}

Note also that, with this definition of leader election, nodes do not need to detect when convergence occurs.

\subsection{Our Result}

In this paper, we prove the following result.

\begin{theorem}
\label{thm:main}
For any $n$-node connected graph $G=(V,E)$, if the nodes share the common knowledge of a parameter $N =\Theta(\log n)$, there exists a self-stabilizing randomized algorithm $\mathcal{A}$ such that:
\begin{itemize}
    \item Each node uses $\cO(\log n)$ states, i.e. $\cO(\log \log n)$ bits of memory; 
    \item Each node uses two random bits per round;
    \item The algorithm $\mathcal{A}$ stabilizes in $\cO(\mathrm{poly}(n))$ rounds with high probability.
\end{itemize}
\end{theorem}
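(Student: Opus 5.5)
The proof is constructive. We exhibit the algorithm $\mathcal{A}$ and verify the three items in turn; the memory and randomness items are immediate from the design, and most of the effort goes into convergence.

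\textbf{The algorithm.} It adapts the constant-state beeping protocol of Vacus and Ziccardi to the synchronous state model. Each node holds a leader flag, a phase counter modulo $\cO(N)$, a constant number of status bits, and a constant number of \emph{wagon} slots of the information train. Each slot stores one bit of a $\cO(\log n)=\cO(N)$-bit value together with an index in $\{0,\dots,N\}$. All fields are $\cO(N)$-valued or constant, so $|\mathcal{S}| = \cO(N)^{\cO(1)}$. We then refine the encoding so that the product over the non-constant fields is genuinely $\cO(N)$: only one $\cO(N)$-valued field is active at a time, and the rest are stored as constant-size flags. This gives $\cO(\log n)$ states, hence $\cO(\log\log n)$ bits.

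\textbf{Leader competition.} Each candidate leader repeatedly samples a random $\Theta(N)$-bit priority, using one fair coin per round for $\Theta(N)$ rounds. It then emits the priority as a train: consecutive wagons carry the bits in order and travel outward by BFS-like propagation, so every node only ever stores its current wagon. When a node sees two trains, it compares them lexicographically wagon by wagon and forwards only the larger one. A leader that receives a strictly larger train is eliminated. The second random bit per round is spent on a lazy restart. A node that detects local inconsistency, via the $\ErrSuccessor$ or $\ErrOverflow$ checks, resets. A node that sees no leader signal within the expected window becomes a candidate with small probability. This guarantees at least one leader and clears garbage.

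\textbf{Proof structure.} The proof proceeds through lemmas in the following order.
\begin{enumerate}
\item \emph{Cleaning.} Within $\cO(n)$ rounds of any configuration, every wagon still present was emitted by a real leader after time $0$. Corrupted wagons are either detected by a local consistency predicate or flushed out after traversing the diameter.
\item \emph{Train correctness.} Once clean, a train emitted by a leader reaches every node within $\cO(n+N)$ rounds with its bits in order, and the comparison is exact.
\item \emph{Existence.} The configuration with zero leaders is escaped within $\mathrm{poly}(n)$ rounds w.h.p.
\item \emph{Elimination.} In each competition epoch with $k\ge 2$ leaders, the maximum priority is unique with probability $1-n^{-\Omega(1)}$, since priorities have $cN$ bits with $c$ large. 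So all but one leader are eliminated and none are created, because creation requires absence of any leader signal.
\end{enumerate}
The single-leader configurations with clean trains form the legitimate set $\legitimate$. Closure follows from the fact that a lone leader never sees a larger train and no node ever sees silence. A union bound over $\mathrm{poly}(n)$ epochs then yields convergence in $\cO(\mathrm{poly}(n))$ rounds w.h.p. Convergence with probability $1$ follows by restarting the argument from an arbitrary configuration, giving a geometric tail.

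\textbf{Main obstacle.} The hardest step is the cleaning lemma. Adversarial initial wagons can form cycles or impersonate leaders forever, because the graph is anonymous and a node cannot bound a train's age with only $\cO(\log\log n)$ bits. The first approach is to tie each wagon's index to the phase counter of its emitter, so that along any propagation path the index-versus-phase relation is locally checkable. A wagon whose index is not the successor of its predecessor's triggers $\ErrSuccessor$, and an index exceeding $N$ triggers $\ErrOverflow$. One then shows that any cyclic or orphaned wagon chain violates this relation within $\cO(n)$ rounds. If this fails, the fallback is to add a parity-based distance-mod-3 layer that orients trains away from their source and rules out cycles directly.
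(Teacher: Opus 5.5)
There is a genuine gap, and it is exactly at the step you flag as the main obstacle: nothing in your algorithm bounds the age of a train, so the cleaning and existence lemmas cannot be proved. Both of your proposed fixes are locally checkable labelings, and such labelings can be satisfied on a cycle with no leader at all. Suppose a wagon's index must be the successor (mod $N$) of its predecessor's, or distances must be consistent modulo $3$. Then an adversarial initial configuration can place a self-sustaining train on any cycle whose length is a multiple of $N$ (resp.\ $3$). Every node sees a perfectly consistent predecessor forever, so \ErrSuccessor{} never fires. Your \ErrOverflow{} is vacuous, since indices live in $\{0,\dots,N\}$ and can never exceed $N$. No node ever observes a silent window, so a leaderless configuration is never escaped, and a phantom train carrying a large priority even eliminates every genuine leader.

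The missing idea is that the age should live in the train, not in the node. The paper makes each train a distributed $N$-bit binary counter (the carry field and the \Add{} function), whose value grows by at least one every time it advances (\Cref{lem:train-incr}). Hence any train not freshly emitted by a live leader reaches $2^N$ within $2^N$ rounds and is destroyed via \ErrOverflow{} (\Cref{lem:leader-creation,lem:marked-disparition}). Meanwhile, $N \ge 1+\log n$ guarantees that genuine trains in the BFS pattern around the leader have value $\lfloor k/2^{B_k.\id}\rfloor$ and never overflow, which is what makes $\legitimate$ closed.

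Your elimination lemma has a second, independent gap. A leader has only $\cO(\log N)$ bits, so it cannot remember its own $\Theta(N)$-bit priority, and ``receives a strictly larger train'' has no implementable meaning at the leader. Comparing two trains wagon by wagon at an intermediate node likewise requires them to arrive phase-aligned. Nothing guarantees this, since phase counters are adversarial and leaders appear at arbitrary times. Moreover, priorities are resampled every epoch while trains of many epochs are in flight simultaneously over distances up to $D \gg N$. A leader can therefore be eliminated by a stale train of a rival that has itself been eliminated in the meantime. Mutual elimination is possible, and you are thrown back onto the broken existence step.

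The paper avoids priority comparisons altogether:
\begin{itemize}
\item a leader marks a train with probability $4^{-N}$ (the $v.\rand \gets v.\rand \cdot X$ trick, using two random bits per round);
\item the head of a marked train kills any leader carrying an unmarked train;
\item old marked wagons vanish by overflow (\Cref{lem:marked-disparition});
\item a single marked train emitted when no other marked wagon exists, with no rival marking for $\cO(D)$ rounds, builds the legitimate BFS structure (\Cref{lem:leg-grow}).
\end{itemize}
\Cref{lem:probability-success} then gives success probability $\Omega(4^{-N})$ per window of $\cO(2^N)$ rounds, hence convergence in $\cO(2^{3N}\log n)$ rounds w.h.p.
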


More precisely, we prove that if the nodes have a common knowledge of a parameter \linebreak\mbox{$N \ge \max\{5,1+\log n\}$}, the nodes need $\Theta(\log N)$ bits of memory and $\Theta(N)$ states, and the algorithm stabilizes in $\cO(2^{3N}\log n)$ rounds w.h.p. In particular, when $N = \log n + \cO(1)$, we have convergence in time $\cO(n^3 \log n)$.

\subsection{Our Algorithm: an Informal Description}
To solve the self-stabilizing leader election problem, an algorithm $\mathcal{A}$ must simultaneously handle two key properties.  
The first is the \emph{verification} property, which is required for self-stabilization: leaders must propagate their presence throughout the network to prevent other nodes from becoming leaders, and nodes must be able to detect when no leader is present.  
The second is \emph{symmetry breaking}, which becomes necessary when multiple leaders are present simultaneously; nodes must break these symmetries to eventually elect a unique leader.

\paragraph{Verification by trains.}
Our algorithm is based on the idea of \emph{informative trains}\footnote{
While trains of information have appeared in prior work~\cite{korman2015fast}, those constructions use $\Omega(\log n)$ bits per node and serve a different purpose. In contrast, our approach achieves verification using only $\Theta(\log \log n)$ bits per node.} to handle the verification phase.  
In each round, leaders generate a \emph{train} of information that propagates through the network. Specifically, each leader creates a train consisting of $N$ \emph{wagons}, where $N$ is the common parameter known to all nodes ($N = \Theta(\log n)$, with $N \ge \max\{5,1+\log n\}$).
Each wagon carries minimal information: its position in the train (encoded using $\log N$ bits) and a single \emph{bit}. The wagons propagate through the network along a BFS-like traversal. Consecutive wagons can be interpreted as a binary string of length $N$, which can collectively store up to $N$ bits of information.
When created, these strings contain all zeros.

Each node actually carries two wagons. At each round, wagons shift by one position: the first wagon goes to the next node, while the last wagon moves to first position. This way, it takes two rounds for the back wagon to advance to the next node; this allows us to clean incomplete or erroneous trains from the network: intuitively, the cleaning is twice as fast as the train, and can catch up to the head of the train.

A key property of our trains is that their value can be distributively updated as they traverse the network. Formally, each train $T$ is associated with a value $\Value(T)$, which is the binary value of its string. At each step, nodes increment this value by at least one, effectively performing a distributed binary counter. Since $N \geq 1+\log n$, the train can count modulo at least $n$, ensuring that it can traverse the network for more than $2D$ rounds -- which is necessary to traverse the whole network, since the train only advances by half a node at each round.
Furthermore, our trains are designed to \emph{disappear} after reaching their maximum value (after at most $2^N = \cO(\mathrm{poly}(n))$ rounds). This mechanism ensures that in the absence of leaders, trains detect an \emph{overflow}, vanish, and trigger the creation of new leaders as needed.

\paragraph{Legitimate configurations.}
Our algorithm converges towards legitimate configurations which can be described in the following way. Naturally, legitimate configurations contain a unique leader \(v^*\). Moreover, the trains form a BFS-like structure around \(v^*\): the state of a node depends only on its distance to \(v^*\) and, when considering sequences of wagons along the paths of this BFS structure, we obtain coherent trains. Moreover, the value of these trains is closely related to their distance to \(v^*\): intuitively, a train \(T\) that has been traveling away from \(v^*\) for \(k\geq 0\) rounds has been incremented \(k\) times, and satisfies \(\Value(T) = k\). This guarantees that the counter never reaches \(2^N\), which is necessary for the closure property of legitimate configurations: otherwise, some nodes would detect the overflow and become new leaders.

\paragraph{Symmetry breaking.}
For the symmetry-breaking component, our algorithm draws inspiration from the approach in \cite{VacusZ25} for the beeping model, while integrating it with our information trains.  
In \cite{VacusZ25}, leaders emit \emph{beep waves} that expand through the network; leaders crossed by waves are eliminated, and waves from different leaders that meet cancel each other. By assigning leaders different beep frequencies using randomization, this process gradually eliminates all but one leader.  
We adapt this idea to our trains by allowing leaders to generate \emph{marked} or \emph{unmarked} trains at random. Marked trains have priority over unmarked trains and can overtake them, while two marked trains that collide cancel each other. A leader receiving a marked train from another leader is eliminated (i.e. it stops being a leader), but a leader cannot eliminate itself using its own train. Each train carries an additional \emph{flag} indicating whether it is marked or unmarked, enabling the network to enforce these interactions.

\paragraph{The use of randomness and the convergence time}
A key feature of our algorithm is that each node uses only a constant number of random bits per round. Despite this limitation, nodes can generate rare events happening with sufficiently small probability: specifically, every $N$ rounds, a node can produce a Bernoulli random variable with parameter $2^{-\Theta(N)}$.
This is achieved by maintaining a clock modulo \(N\) at each node, along with a random variable. The random variable is set to \(1\) when the clock reaches \(0\), and reset to zero in each round with a constant probability: at the end of each $N$-round phase, the random variable remains nonzero with probability $2^{-\Theta(N)}$ (which equals $n^{-\Theta(1)}$ when $N = \Theta(\log n)$), effectively simulating a rare-event random variable using minimal memory.

We use this mechanism for marking trains: a leader generates a \emph{marked} train if its random variable is $1$. This ensures that, with inverse-polynomial probability, exactly one leader \(v^*\) emits a marked train while all others emit only unmarked trains for a sufficiently long time. This property is crucial for symmetry breaking, as it guarantees that one leader survives while all others are eliminated: \(v^*\) will be able to build around it the BFS structure of legitimate configurations, without any interference with marked trains emitted by other leaders. Moreover, because the event occurs with inverse-polynomial probability when $N = \Theta(\log n)$, the algorithm converges in polynomial time.

\subsection{Roadmap}

The paper is organized as follows. In \cref{sec:related-works}, we provide an overview of related work; in \cref{sec:algo}, we present a detailed description of our algorithm; and in \cref{sec:proofs}, we give the proofs of correctness and convergence of the algorithm, establishing \cref{thm:main}. In \cref{sec:conclusion}, we conclude by summarizing the problems left open by our work.

\section{Related Work}
\label{sec:related-works}
In what follows, we summarize previous work on self-stabilizing leader election (SSLE) algorithms in both anonymous networks and networks with unique IDs, with a focus on approaches that aim to optimize space complexity. We note the direct correspondence between memory and states: an algorithm using $m$ bits of memory requires $2^{\Theta(m)}$ states.

\paragraph{Anonymous Networks}
Self-stabilizing leader election (SSLE) has been extensively studied in anonymous settings across several distributed models. In such contexts, randomization is essential, as deterministic symmetry breaking is impossible~\cite{angluin1980local}.

In the state model, \cite{BeauquierGJ07} studies SSLE on anonymous rings of size $n$, proposing a randomized protocol that uses $\cO(1)$ bits of memory on average, but may require $\Omega(\log \log n)$ bits in the worst case ring size. 
In detail, the authors' solution to SSLE uses $\Theta(\log m_n)$ bits of memory, where $m_n$ denotes the smallest integer that does not divide $n$. This value is constant on average but, for certain values of $n$ (for example, $n = k!$ for some $k \geq 1$), it implies a memory requirement of $\Theta( \log m_n)= \Theta(\log \log n)$.
The protocol is not silent and operates under an arbitrary scheduler. 
\cite{higham1999self} presents a randomized self-stabilizing leader election algorithm for anonymous, uniform, synchronous, and unidirectional rings of arbitrary but known size. In their protocol, both the processor states and the messages have size $\cO(\log n)$, and the expected stabilization time is $\cO(n \log n)$. 
In \cite{Itkis1995fast}, the authors present a randomized self-stabilizing leader election protocol for anonymous networks of arbitrary topology under an asynchronous scheduler\footnote{The paper does not specify which fairness assumptions are made on the scheduler.} using a constant number of bits of memory \emph{per edge}.
In \cite{Ostrovsky94}, the authors present a SSLE algorithm for anonymous networks using $O(\log^* n)$ bits of memory \emph{per edge}, assuming an asynchronous weakly fair scheduler.

Regarding lower bounds, \cite{BeauquierGJ99} proves a non-constant lower bound on the memory space for SSLE on unidirectional rings, which is $\Omega(\log \log n)$ for some values of $n$,\footnote{For detail, see the above discussion on the upper bound of \cite{BeauquierGJ07}.} under an unfair scheduler, which holds for anonymous networks and even for randomized protocols. 
In~\cite{DolevGS99}, the authors show that any silent self-stabilizing leader election algorithm on an anonymous ring -- even if randomized and executed under a synchronous scheduler -- requires at least $\sqrt{n}$ states and $\Omega(\log n)$ bits of memory.

The Stone Age model is a very weak distributed computing model in which nodes are anonymous, have constant memory, and can only perform extremely limited local computations, typically interacting with neighbors through simple signals or finite-state automata. In this model, the self-stabilizing leader election has been considered in \cite{EmekK21}, where a randomized self-stabilizing solution is proposed, with state space $\cO(D)$ and stabilizing in $\cO(D\log n)$ interactions w.h.p. The algorithm assumes the knowledge of $D$, and has no termination detection. The protocol can be implemented in the Stone-Age model only when $D = \Theta(1)$, but can be implemented in the state-model for any value of $D$.

In the population protocol model, we have a system of $n$ anonymous agents where, at each round, a pair of agents chosen u.a.r. interacts. 
Several SSLE protocols have been proposed for the population protocol model. In
\cite{YokotaSM21}, a non-terminating protocol for directed rings is proposed, converging in $\Theta(n^2)$ expected steps and using $\cO(n)$ states, requiring the approximate common knowledge of $n$.  Angluin et al.~\cite{AngluinAFJ05} propose a family of SSLE constant-state protocols, each corresponding to a class of rings.  In
\cite{Chen2019self}, a SSLE protocol  is proposed for directed rings of all sizes and tori of arbitrary sizes. The protocol uses a constant number of states, and terminates within an exponential number of interactions in the worst case.  In
\cite{chen2020self}, a $\cO(\mathrm{poly}(k))$-states protocol is proposed for $k$-regular graphs.
The recent papers \cite{AustinBFGH25} and \cite{gasieniec2025improving} consider the SSLE on complete graphs and explicitly optimize the time complexity while carefully balancing the trade-off between space and time. However, both papers solve the SSLE problem via the ranking problem, so both require $\Omega(n)$ states.  
As regarding lower bounds, 
\cite{CaiIW12} proves a lower bound of $\Omega(n)$ states for SSLE in worst-case graphs. In \cite{AngluinAFJ05}, it is proved that there does not exist a self-stabilizing protocol with constant memory for leader election in interaction graphs with general topology.

The following papers focus on space-efficient solutions for leader election in general graphs, while they do not address self-stabilization. The paper \cite{beauquier2013self} proposes a population protocol that solves leader election using a constant number of states per node for general graphs. \cite{AlistarhRV25} studies leader election in general graphs, establishing a worst-case expected time complexity of \(\Theta(n^3)\) rounds and presenting a near time-optimal protocol that requires only \(\cO(\log^2 n)\) states per node. Finally, \cite{VacusZ25} introduces a constant-state leader election protocol in the beeping model, converging in time $\cO(D^2 \log n)$, where $D$ is the network diameter.


\paragraph{Networks with IDs.}
We summarize in this section the most recent work on self-stabilizing leader election in networks with unique identifiers, using the state model. 
This is not meant to be an exhaustive survey, as we focus on settings different from those considered in this paper.
Regarding lower bounds, it is known that in the deterministic setting, electing a leader in a self-stabilizing manner requires at least 
$\Omega(\log \log n)$ bits of memory per node~\cite{BlinFB23}. 
For general networks with unique identifiers, \cite{BlinT20} proposed an algorithm using 
$O(\log \Delta + \log \log n)$ bits of memory per node, where $\Delta$ is the maximum degree of the network. 
Several deterministic SSLE algorithms have also been proposed for specific topologies. In particular, \cite{BeauquierGJ99} presents a deterministic, silent SSLE protocol using constant memory on unidirectional, ID-based rings with bounded identifiers, while \cite{ItkisLS95} presents a deterministic constant-space SSLE protocol on uniform bidirectional asynchronous rings of prime size.

\section{Formal Description of Our Algorithm}
\label{sec:algo}
In this section, we formally describe our algorithm. In \Cref{sec:node-variables}, we define the state space $\States$, in \Cref{sec:trains} we describe the circulation of trains, in \Cref{sec:errors} we describe how leaders are created, in \Cref{sec:leaders} we describe the behavior of leaders, and finally in \Cref{sec:transition-function} we describe our algorithm.
We assume that all nodes have common knowledge of a value $N$ such that \(N \geq 1+\log n\) and \(N = \Theta(\log n)\). We also assume \(N > 4\).

For any graph $G=(V,E)$ and a node $v \in V$, we denote by $N(v)$ the neighborhood of $v$. 
For any integer $k\geq 1$ we write $a \% k$ to denote the remainder when $a$ is divided by $k$, i.e. $a \pmod{k}$. For any $h \geq 1$, we denote with $[h] = \{0,\dots, h-1\}$.

\subsection{Node Variables}
\label{sec:node-variables}

The state space $\States$ is defined as a finite collection of state variables indexed by a set  $\mathcal{K}$, where $|\mathcal{K}| = \Theta(1)$, which together constitute its state. Formally, the state space is the product space $
\States = \prod_{\ell \in \mathcal{K}} D_\ell,$
where $D_\ell$ is the domain of the variable $\ell \in \mathcal{K}$. 
Hence, a node’s state is specified by the values of its state variables. In particular, for any node $v \in V$ and any variable $\ell \in \mathcal{K}$ we denote by  $v.\ell$ the value of the variable $\ell$ in the local state of node $v$, and with $\State v$ the local state of the node $v$. Moreover, for any $\ell \in \mathcal{K}$, we will denote by $v.\ell_t$ the value of the variable $v.\ell$ at the end of the execution round $t$ of the algorithm. 
We denote by $\gamma_t$ the configuration of the system at time $t$, which is the collection of the states of all nodes in the graph $G=(V,E)$ at that round. Formally, a configuration $\gamma_t$ is a vector of all node states:
\[\gamma_t = (v.\ell_t, \ell \in \mathcal{K})_{v \in V}.\]
The set of all configurations $\mathcal{S}^{|V|}$ is denoted with \(\Gamma\). We next define the variables associated to each node.

\begin{definition}[Node Variables]
The \emph{local variables} associated with a node $v$ are the following:
\begin{description}[topsep=3pt, leftmargin=0pt, labelsep=4pt, itemsep=-1pt, font=\normalfont]
    \item[\emph{Random Variable:}] $v.\rand \in \{0,1\}$, which determines whether $v$ initializes a marked or unmarked train.
    \item[\emph{Leader Flag:}] $v.\leader \in \{0,1\}$, indicating whether $v$ is a candidate leader.
    \item[\emph{Stations:}] \(v\) contains two stations, $v.F$ and $v.L$, which contain first and last wagons carried by $v$, respectively. See \cref{def:wagon} for the definition of a wagon.
\end{description}
\end{definition}

In particular, each node is associated with two wagon variables, which, as specified later, belong to a train distributed across the nodes. Trains are of two types: marked and unmarked. They circulate over the network and, during their traversal, implement a binary addition. Each wagon consists of a tuple of variables, as defined below.

\begin{definition}[Wagon Variable]
A wagon variable $B$ is either empty, in which case  $B = \bot$, or a tuple with the following components:
\begin{description}[topsep=3pt, leftmargin=0pt, labelsep=4pt, itemsep=-1pt, font=\normalfont]
    \item[\emph{Index of the Wagon:}] $B.\id \in [N]$, representing the position of the wagon within the train.
    \item[\emph{Bit Value:}] $B.\bit \in \{0,1\}$, representing the bit carried by the wagon.
    \item[\emph{Flag:}] $B.\flag \in \{0,1\}$, indicating whether the wagon belongs to a marked train.
    \item[\emph{Carry:}] $B.\carry \in \{0,1\}$, representing the carry generated during binary addition.
\end{description}
\label{def:wagon}
\end{definition}

Note that each variable has at most \(N\) different values, which ensures a space complexity of \(\cO(\log N) = \cO(\log \log n)\) bits.

\subsection{Trains and Wagon Circulation}
\label{sec:trains}

\subsubsection{Definitions}

We define a train as a sequence of consecutive nonempty wagons carried by nodes, which consistent flags.

\begin{definition}[Train]
    For $k_1,k_2 \in [N]$ with $k_1 \leq k_2$ a train $T$ is a set of wagons $B_{k_1}, \dots, B_{k_2}$ such that 
    \begin{enumerate}[(a)]
        \item $B_i.\id =i$ for each $k_1 \leq i \leq k_2$;
        \item $B_i \neq \bot$ for every $i \in [N]$
        \item $B_i.\flag = B_j.\flag$ for each $k_1 \leq i, j \leq k_2$.
    \end{enumerate}
    \label{def:train}
\end{definition}

In our algorithm, a train is represented as a sequence of consecutive wagons carried by nodes and circulates through the network by moving from node to node. 
Trains are continuously created by leaders: each time a leader generates a new train, it randomly decides whether the train is marked or unmarked, based on the variable $v.\rand$. 
The trains then propagate through the network via all nodes, serving to disseminate the existence of leaders. 
The circulation rules are designed so that, in the absence of leaders, any existing train of length $N$ disappears after $\cO(2^N)$ rounds. 
This guarantees that when no leaders are present, the trains eventually vanish, triggering the creation of new leaders.

The use of trains provides a way to distribute the information carried by the $N$ wagons, while requiring only $\cO(\log N)$ states per node: indeed, each node stores, for the wagons it carries, its identification via the wagon number. A train $T$ is a set of $N$ (or less) consecutive wagons: the $N$ bits carried by them, hence, form an $N$-bit string, which can encode up to $2^N$ distinct $N$-bit strings. 
Thus, by distributing $N$ wagons among $N$ consecutive nodes on a path and by implementing a distributed procedure to update their values, we can realize a distributed counter capable of representing values up to $2^N$, using only $\cO(\log N)$ states per node. 
To formalize what it means for a train to store a number up to $2^N$, we associate a numerical \emph{value} with each train. 
The value of a train is the number corresponding to the lexicographic rank of the binary string formed by its consecutive wagons, among all strings of the same length, as defined formally below.

\begin{definition}[Value of a Train]
Let $k_1, k_2 \in [N]$ with $k_1\leq k_2$, and let $T = B_{k_1}\cdots B_{k_2}$ be a train. The \emph{value} of the train $T$ is given by
\[\Value(T) = 2^{-k_1}\sum_{j=k_1}^{k_2}2^{j}( B_j.\bit + 2 \cdot B_j.\carry).\]
\label{def:train-value}
\end{definition}

\begin{observation}
\label{obs:train-value}
Notice that, for any train  decomposition $T =(B_{k_1})(B_{k_1+1}\cdots  B_{k_2})$, we have that
\[\Value(T) = \Value(B_{k_1}) + 2\Value(B_{k_1+1} \cdots B_{k_2}) = B_{k_1}.\bit + 2B_{k_1}.\carry + 2\Value(B_{k_1+1}\cdots B_{k_2}).\]
This writing evidences the following useful facts.
\begin{align*}
    \Value(T) \% 2 &= B_{k_1}.\bit \\
    \left\lfloor \frac{\Value(T)}2\right\rfloor &= B_{k_1}.\carry + \Value(B_{k_1+1}\cdots B_{k_2})
\end{align*}
Moreover, our algorithm maintains the invariant that if a wagon \(B\) is the last wagon of a train (ie. \(B.\id = N-1\)), then we always have \(B.\carry = 0\). This means that:
\[
\Value(B) = B.\bit
\]
\end{observation}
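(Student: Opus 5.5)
The statement to justify is Observation~\ref{obs:train-value}. The first two identities follow directly from Definition~\ref{def:train-value}; the invariant about the last wagon is a property of the algorithm and must be checked later.

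\textbf{The decomposition identity.} Write $T = B_{k_1}\cdots B_{k_2}$ and let $T' = B_{k_1+1}\cdots B_{k_2}$. Split the defining sum into the term $j=k_1$ and the remaining terms. The term $j=k_1$ equals $2^{-k_1}2^{k_1}(B_{k_1}.\bit + 2B_{k_1}.\carry) = \Value(B_{k_1})$. For the remaining terms, factor $2^{-k_1} = 2\cdot 2^{-(k_1+1)}$ to get
\[
2^{-k_1}\sum_{j=k_1+1}^{k_2}2^{j}(B_j.\bit + 2B_j.\carry) = 2\Value(T').
\]
When $k_1 = k_2$, the sub-train $T'$ is empty and we take $\Value(T')=0$ by convention.

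\textbf{Parity and halving.} The decomposition gives
\[
\Value(T) = B_{k_1}.\bit + 2\bigl(B_{k_1}.\carry + \Value(T')\bigr).
\]
We have $B_{k_1}.\bit\in\{0,1\}$, and the bracket is a nonnegative integer, since each summand of $\Value(T')$ is $2^{j-k_1-1}$ times an integer with $j \ge k_1+1$. Integer division by $2$ therefore yields remainder $B_{k_1}.\bit$ and quotient $B_{k_1}.\carry + \Value(T')$.

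\textbf{The last-wagon invariant.} If $B.\id = N-1$ and $B.\carry = 0$, the defining formula gives $\Value(B)=B.\bit$ immediately. The real content is that $B.\carry = 0$ is maintained for every wagon with index $N-1$. This is the main obstacle, because it depends on the transition rules, which are specified only later in the paper.

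I would prove this by induction on rounds, via a case analysis over the ways a wagon can be written into a station. There are three cases:
\begin{itemize}
\item Creation by a leader: all bits and carries are zero.
\item Shifting from $L$ to $F$ or across an edge: the carry is copied unchanged.
\item The increment step: the carry of wagon $N-1$ would be produced by an overflow.
\end{itemize}
In the increment case I expect the algorithm either to discard the carry or to raise \ErrOverflow\ and delete the train. Either way, no wagon with index $N-1$ is ever stored with carry $1$.

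One caveat concerns self-stabilization. An arbitrary initial configuration might contain such a wagon. I would therefore either prove that the rules reset the carry on index $N-1$ at the first update, or restrict the claim to configurations reached after one round.
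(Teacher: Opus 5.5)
The decomposition, parity and halving identities are correct as you derive them. The paper means nothing more for that part: it offers no argument beyond \Cref{def:train-value}. The last-wagon invariant is only asserted in the paper, and it rests on how the local errors $\err_4,\err_5$ and the predicates $\ErrOverflow^F,\ErrOverflow^L$ are designed.

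The gap is in your plan for the invariant, which misdescribes the mechanism. No move copies a carry unchanged. Every transfer, whether from $v.L$ into $v.F$ or from a neighbour's $u.F$ into $v.L$, is done by $\Add(B,B')$, and $\Add$ never reads $B'.\carry$. The new carry is computed from the incoming bit $B'.\bit$ and the carry already stored at the station, $B.\carry$; a carry stays at its station and is absorbed by the next, more significant wagon passing through. So your ``shift'' and ``increment'' cases are one and the same operation. An induction hypothesis about wagons of index $N-1$ then gives you nothing, because the carry that matters belongs to the preceding wagon of index $N-2$, and that carry can legitimately be $1$.

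What you need instead is the following. At a non-leader, $\Add$ writes a wagon with $\id=N-1$ and $\carry=1$ exactly when the station holds an index-$(N-2)$ wagon with carry $1$ and the incoming wagon has bit $1$.
For $v.F$ this condition is literally $\ErrOverflow^F(v)$. For $v.L$ it is $\ErrOverflow^L(v)$: the $\max$ over successors matches the $\argmax$ choice, and the flag conjunct excludes the case where a marked head of index $0$ is taken instead. In either case $\err(v)$ holds, so $\NewLeader$ runs and writes zero carries.
At leaders, both $\NewLeader$ and \WagonUpdateLeader\ write $v.L$ with bit $0$ and carry $0$. From then on $\Add(v.F,v.L)$ yields $\bit+2\cdot\carry\le 1$, so no carry arises, and a leader eliminated in a round takes an index-$0$ marked head into $v.L$.
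Initial violations at non-leaders are removed by $\err_4,\err_5$.

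Your fallback of restricting to configurations after one round is off by one, because leaders never evaluate $\err$. Take an initial leader with $v.F.\id=N-2$, $v.F.\carry=1$, $v.L.\id=N-1$ and $v.L.\bit=1$. During round one it writes an overflowing last wagon into its own $F$ station, so the invariant is only guaranteed from time $2$ on. The paper glosses over this as well when it says local errors disappear after the first round.
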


\subsubsection{Train Circulation}

In this section, we specify how train circulation is implemented. In particular, we describe how a node determines which wagon to carry in the next step and how it updates the respective value. The rules of the update are designed so that each time a train $T$ advances by one step in the network, its value $\Value(T)$ is incremented by at least one. This property is crucial for our algorithm, as it ensures that, in the absence of leaders, a train cannot circulate in the network for more than $\cO(n)$ rounds.

\paragraph{The next wagon.} In this section, we describe how a node $v$ selects and forwards the next wagon of a train carried by one of its neighboring nodes.  
For a wagon $B$, we denote by $\Next(B)$ the index of the subsequent wagon:
\[
\Next(B) = (B.\id + 1) \%N,
\]
so that the index wraps around to $0$ if $B$ is the last wagon of the train ($B.\id = N-1$), and is $B.\id + 1$ otherwise.
A key property of our algorithm is that \emph{marked} trains can pass over \emph{unmarked} ones. Specifically, if the head of a marked train is present in the neighborhood of $v$, the node $v$ will allow it to pass over any unmarked train. Moreover, if $v$ already contains a marked wagon that is not the last in the train, it should expect the next wagon to be marked as well. To formalize this, we introduce the following predicate. For $v \in V$, let $\SuccIsMarked(v)$ be true if $v$ is expecting a marked train:
\begin{align*}
    \SuccIsMarked(v) = &\; (v.L.\flag = 1 \wedge v.L.\id \neq N-1) \\
    &\;\vee (\exists u \in N(v): u.F.\flag = 1 \wedge u.F.\id = 0).
\end{align*}
For every node $v$, we define $\subc^1(v)$ as the set of neighbors carrying the next marked wagon for $v$, i.e.
\begin{align*}
    \subc^1(v) = \{u \in N(v): &(u.F.\flag = 1) \wedge \\ &(v.L.\flag = 1 \wedge u.F.\id = \Next(v.L)) 
    \vee  (v.L.\flag = 0 \wedge u.F.\id = 0))\}
\end{align*}
In detail, $\subc^1(v)$ contains the head of a marked train in the case in which $v.L.\flag = 0$, and the subsequent marked wagon in the case in which $v.L.\flag = 1$.
Similarly, we define $\subc^0(v)$ as the set of neighbors carrying the next unmarked wagon for $v$:
\begin{equation}
    \subc^0(v) = \{ u \in N(v): u.F.\flag = 0 \wedge u.F.\id = \Next(v.L) \}.
\end{equation}
In particular, a node $u \in \subc^0(v)$ is such that $u.F$ can either be the first wagon of a new train (when $v.L.\id = N-1$ and $u.F.\id = 0$) or the next wagon of the same train.

\paragraph{Performing the binary addition. }

We define the function $\Add$ to update the value of a train as it moves through the network. The function operates locally on a wagon $B$, using information from the next wagon in the train, denoted by $B'$.
Intuitively, $\Add$ ensures that the train behaves like a distributed binary counter: whenever the train advances, its value is incremented by one. The increment is applied in the first wagon of the train, and if it cannot be completed immediately, any overflow is stored in the wagon’s carry. This carry is propagated along the train and applied as soon as possible, ensuring that the train’s value is updated correctly and consistently as it circulates.

\begin{definition}[Addition function]
\label{def:merge-and-add}
The $\Add$ function update the value of the block $B$, with respect to a value of the next block $B'$.
\begin{center}
\begin{minipage}{\textwidth}
\begin{algorithm}[H]
\SetKwFunction{FAdd}{Add}
\SetKwProg{Fn}{}{:}{}
\Fn{\FAdd{$B,B'$}}{
\eIf{\(B'.\id = 0\)}{
$B.\bit \gets (B'.\bit + 1)\% 2$\;
$B.\carry \gets \mathds{1}(B'.\bit + 1 = 2)$\;
}{
$B.\bit \gets (B'.\bit + B.\carry)\%2$\;
$B.\carry \gets \mathds{1}(B'.\bit + B.\carry = 2)$
}
}\textbf{end}
\end{algorithm}
\end{minipage}
\end{center}
\end{definition}

\begin{observation}
    \label{obs:add-one-bit}
    \(\Add\) performs a one-bit addition, in the following sense.
    Denote \(B''\) the value of \(B\) after the execution of \(\Add(B,B')\). Then:
    \begin{itemize}
        \item if \(B'.\id = 0\), then \(B''.\bit+2\cdot B''.\carry = B'.\bit + 1\);
        \item otherwise, \(B''.\bit + 2 \cdot B''.\carry = B'.\bit + B.\carry\).
    \end{itemize}
\end{observation}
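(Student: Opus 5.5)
The claim follows by a direct case analysis on the two branches of \Add{} as given in \cref{def:merge-and-add}. The only fact needed is that, for integers $x\in\{0,1,2\}$,
\[
(x \% 2) + 2\cdot\mathds{1}(x = 2) = x .
\]
This identity can be checked on the three values: for $x=0$ both terms vanish; for $x=1$ the first term is $1$ and the second is $0$; for $x=2$ the first term is $0$ and the second is $2$.

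\textbf{Branch $B'.\id = 0$.} Let $x = B'.\bit + 1$. Since $B'.\bit\in\{0,1\}$, we have $x\in\{1,2\}$. The algorithm sets
\[
B''.\bit = x\%2, \qquad B''.\carry = \mathds{1}(x=2),
\]
so the identity gives $B''.\bit + 2\cdot B''.\carry = x = B'.\bit+1$.

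\textbf{Branch $B'.\id \neq 0$.} Let $x = B'.\bit + B.\carry$. Here $x$ is evaluated on the \emph{old} carry of $B$, and $x\in\{0,1,2\}$ because both summands are bits. The algorithm sets
\[
B''.\bit = x\%2, \qquad B''.\carry = \mathds{1}(x=2).
\]
Again the identity gives $B''.\bit+2\cdot B''.\carry = x = B'.\bit + B.\carry$.

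The only point needing care is the order of assignments in the second branch. The assignment to $B.\bit$ comes first, and it does not modify $B.\carry$. So when the carry line runs, $B.\carry$ still holds its old value, and both lines are evaluated on the same $x$. There is no substantive obstacle beyond this.
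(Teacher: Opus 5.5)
Your proof is correct and is exactly the direct branch-by-branch check that the paper leaves implicit; the paper states this as an observation with no written proof, since it follows immediately from the definition of \Add. Your remark that the second branch uses the old value of $B.\carry$ in both assignments is the one point needing care, and you handle it correctly.
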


\paragraph{Updating the wagon.}
We will now describe the \WagonUpdate~function, which determines how the two wagons $v.F$ and $v.L$ carried by a node $v$ are updated.

\begin{definition}[Wagon Update function]
\label{def:wagon-update}
The \WagonUpdate~function has as input the wagon carried by a node $v$, i.e. $v.F$ and $v.L$, together with the wagon carried by its neighbors and update the values of $v.F$ and $v.L$.
\begin{center}
\begin{minipage}{\textwidth}
\begin{algorithm}[H]
\SetKwFunction{FAdd}{\WagonUpdate}
\SetKwProg{Fn}{}{:}{}
\Fn{\FAdd{$v.F, v.L, \{w.L, w \in N(v)\}$}}{
\eIf{$\SuccIsMarked(v)$}{
\eIf{\(v.L.\flag = 1 \vee v.L.\id = N-1\)}{\(\Add(v.F,v.L)\)\;\tcp{Updates $v.F$}}{\(v.F \gets \bot\)\;}
$u \gets \argmax_{w \in \subc^1(v)}w.F.\bit$\;}
{
\(\Add(v.F,v.L)\)\;\tcp{Updates $v.F$}
$u \gets \argmax_{w \in \subc^0(v)}w.F.\bit$\;}
$\Add(v.L, u.F)$\;
\tcp{Updates $v.L$}
}\textbf{end}
\end{algorithm}
\end{minipage}
\end{center}
\end{definition}

\subsection{Error Predicates and Leader Creation}
\label{sec:errors}

The aim of this section is to define the error predicate $\err(v)$ associated with a node $v$. 
Our algorithm is designed so that whenever a node $v$ detects an error, it declares itself a leader. 
The predicate $\err(v)$ is defined as the disjunction of two sub-predicates, which correspond to \emph{local} and \emph{global} errors.

\emph{Local} errors depend solely on the variables stored at node $v$. 
The algorithm guarantees that local errors are never generated during execution and, if present initially, are eliminated within the first round.
Specifically, a node $v$ detects a local error if one of the following conditions holds: (1) $v$ carries no last wagon; (2) both wagons carried by $v$ are non-empty, but $v.L$ is not a successor of $v.F$; (3) both wagons carried by $v$ are non-empty, belong to the same train, but their flags differ; (4-5) one of the wagons carried by \(v\) overflows, meaning it is the last wagon of a train but still has a carry.
The local errors are formally defined as follows.

\begin{definition}[Local Errors]
For a node $v$, we define the following predicates:
\begin{align*}
  \err_1(v) &:= v.L = \bot, \\
  \err_2(v) &:= v.L \neq \bot \;\wedge\; v.F \neq \bot \;\wedge\; v.L.\id \neq (v.F.\id + 1)\%N, \\
  \err_3(v) &:= v.L \neq \bot \;\wedge\; v.F \neq \bot \;\wedge\; v.L.\id \neq 0 \;\wedge\; v.L.\flag \neq v.F.\flag, \\
  \err_4(v) &:= v.F.\id = N-1 \wedge v.F.\carry = 1, \\
  \err_5(v) &:= v.L.\id = N-1 \wedge v.L.\carry = 1, \\
\end{align*}
\end{definition}

Because our algorithm guarantees that local errors are eliminated after the first round, we will implicitly assume that the above predicates are always false.

The \emph{global} error conditions at a node $v$ depend on the variables of $v$ as well as the variables of its neighbors $u \in N(v)$. 
When a node detects a global error, it declares itself a leader.  Unlike local errors, global errors can be created during execution, for instance when nodes detect (directly or indirectly) the absence of leaders.  Specifically, a node $v$ detects a global error if at least one of the following conditions holds: (1) $v$ has no proper successors among the wagons $u.F$ carried by its neighbors $u \in N(v)$ that are consistent with both the flag and the wagon number in $v.L$; 
    (2) The next wagon carried by some $u.F$ has $u.F.\bit = 1$ while simultaneously $v.L.\carry = 1$. 
    The latter condition captures an \emph{overflow} scenario, which occurs when the train has already been incremented to its maximum value and cannot be incremented further: the leader creation in such case is essential to ensure that trains do not circulate for more than $\cO(2^{N})$ rounds. 
    The global errors are formally defined as follows. 

\begin{definition}[Global Errors]
For a vertex $v$, we define the following predicates:
\begin{align*}
\ErrSuccessor(v) &:= \subc^{\SuccIsMarked(v)}(v) = \emptyset \\
\ErrOverflow^L(v) & := v.L.\id = N-2 \wedge v.L.\carry = 1 \wedge \max_{u \in \subc^{\SuccIsMarked(v)}(v)}u.F.\bit = 1 \\
    & \quad \wedge v.L.\flag = \SuccIsMarked(v) \\
\ErrOverflow^F(v) & := v.F.\id = N-2 \wedge v.F.\carry = 1 \wedge v.L.\bit = 1 \\
\ErrOverflow(v) & := \ErrOverflow^L(v) \vee \ErrOverflow^F(v)
\end{align*}
\end{definition}

We define the error predicate $\err(v)$ to be true if any underlying error occurs and the node is not already a leader (since errors are only used to trigger leadership), i.e.,
\[\err(v) := \neg v.\leader \wedge\left( \vee_{i=1}^5 \err_i(v) \vee \ErrSuccessor(v) \vee \ErrOverflow(v) \right).\]

As previously mentioned, a leader is created at node $v$ whenever $\err(v) = 1$. The following function, $\NewLeader$, specifies how the creation of a leader resets the local variables of $v$. Note that, in defining the following function, node $v$ has access to a Bernoulli$(\frac{1}{4})$ random variable, which can be generated using only two random bits.
\begin{definition}[New Leader]
\label{def:new-leader}
The \NewLeader~function resets some variables associated with $v$. Let $X$ be a Bernoulli random variable with expectation $\frac{1}{4}$.
\begin{center}
\begin{minipage}{\textwidth}
\begin{algorithm}[H]
\SetKwFunction{FAdd}{New-Leader}
\SetKwProg{Fn}{}{:}{}
\Fn{\FAdd{}}{
$v.\leader \gets 1$\;
$(v.F.\id,v.F.\bit, v.F.\carry, v.F.\flag) \gets (0,1,0,0)$\;
$(v.L.\id,v.L.\bit, v.L.\carry, v.L.\flag) \gets (1,0,0,0)$\;
$v.\rand \gets X$\;
}\textbf{end}
\end{algorithm}
\end{minipage}
\end{center}
\end{definition}

\subsection{The Leaders' Behaviors}
\label{sec:leaders}

In this section, we describe the behavior of leaders. As previously mentioned, leaders disseminate their presence by creating trains of length $N$, generating one wagon per step. Every $N$ rounds, when a leader creates the head of a new train, it decides whether the train will be \emph{marked} or not. 
Marked trains are generated randomly and are used for \emph{symmetry breaking}, to ensure the elimination of other leaders. 

The randomness used for marking the trains is determined by the local variable $v.\rand$, which is updated so that, every $N$ steps, it produces a new independent Bernoulli random variable with expectation $1/4^N$. This is a key feature of the algorithm: it allows the generation of a very small-probability event over time using only a constant number of random bits per step. 
Specifically, $v.\rand$ is initially a Bernoulli random variable with parameter $1/4$ and is updated at each step according to a geometric distribution. After a full phase of $N$ rounds, the resulting value is a Bernoulli$(1/4^N)$.

\begin{definition}[Wagon Creation and Random Variable update]\label{def:wagon-update-leader} The following function specifies the behavior of a leader $v$ and, in particular, updates the value of the wagons $v.F$ and $v.L$, and  the random variable $v.\rand$. Let $X$ be a Bernoulli random variable with expectation $\frac{1}{4}$.
\begin{center} 
\begin{minipage}{\textwidth}
\begin{algorithm}[H]
\SetKwFunction{FAdd}{Wagon-Creation}
\SetKwProg{Fn}{}{:}{}
\Fn{\FAdd{$v.F, v.L, v.\rand$}}{
$v.F \gets \Add(v.F,v.L)$\; \tcp{Update $v.F$}
\eIf{$v.L.\id = N-1$}{
$(v.L.\id, v.L.\bit, v.L.\carry, v.L.\flag) \gets (0, 0, 0, v.\rand)$\;
$v.\rand \gets X$\;
}{
$(v.L.\id, v.L.\bit, v.L.\carry, v.L.\flag) \gets (v.L.\id +1, 0, 0, v.L.\flag)$\;
$v.\rand \gets v.\rand \cdot X$
}
}\textbf{end}
\end{algorithm}
\end{minipage}
\end{center}    
\end{definition}


\subsection{Putting the Algorithm Together: the State-Transition Function}
\label{sec:transition-function}

We first define a predicate that is used to eliminate candidate leaders. 
Specifically, a node $v$ carrying an unmarked train is \emph{eliminated} if it receives the head of a marked train from a neighbor:
\[
\IsKilled(v) := (v.L.\flag = 0) \;\wedge\; (\exists u \in N(v), \; u.F.\flag = 1 \;\wedge\; u.F.\id = 0).
\]

We are now ready to define the state-transition function $f : \mathcal{S} \times 2^{\mathcal{S}} \to \mathcal{S}$
which is applied simultaneously at all nodes in the network to update their states. 
Specifically, $f$ updates the state $\State v$ according to the following steps. If node $v$ detects an error, i.e., if the predicate $\err(v)$ defined in \cref{sec:errors} holds, then $v$ becomes a leader. If a leader $v$ is carrying an unmarked train and receives the head of a marked train, i.e., if $\IsKilled(v) = 1$, then it is eliminated as a leader. If $v$ has not been eliminated in the previous steps, it creates wagons according to the $\WagonUpdateLeader$ function described in \cref{sec:leaders}. Finally, if $v$ is not a leader, it updates its wagons according to the $\WagonUpdate$ function described in \cref{sec:trains}.
The algorithm is formally defined below.
\begin{definition}[State-transition function]
At any round $t \geq 0$, for each node $v \in V$, the state $\State v$ is updated according to the following algorithm.
    \begin{center}
\begin{minipage}{\textwidth}
\begin{algorithm}[H]
\SetKwFunction{FAdd}{Update-State}
\SetKwProg{Fn}{}{:}{}
\Fn{\FAdd{$\State v$}}{
\eIf{$\err(v)$}{
$\NewLeader()$\;}{
\If{$\IsKilled(v)$}{$v.\leader \gets 0$}
\eIf{$v.\leader = 1$}{
{
$\WagonUpdateLeader(v.F, v.L, v.\rand)$\;
}
}{
$\WagonUpdate(v.F,v.L, \{w.L, w \in N(v)\})$\;
}}
}\textbf{end}
\end{algorithm}
\end{minipage}
\end{center}  
\end{definition}

\section{Proofs}
\label{sec:proofs}
We start by defining the set of legitimate configurations. Two properties are required: legitimate configuration must contain a unique leader, and legitimate configurations must be closed, in the sense that a transition from a legitimate configuration \(\gamma\) leads to a new legitimate configuration \(\gamma'\) with probability $1$. In addition, we require that the unique leader is the same in both \(\gamma\) and \(\gamma'\). Roughly, we define legitimate configurations as configurations in which trains form a BFS pattern around \(v^*\): all trains are emitted by \(v^*\) and travel away from it, increasing their counters as they travel. The value of a train is closely related to its distance from \(v^*\), guaranteeing that it will never reach \(2^N > D\), which would cause an overflow and thus the creation of a new leader. Let us now define this notion formally.

Let $v^* \in V$ be a distinguished node. Define the \emph{layers} with respect to $v^*$ as
\[L_{2i} = \{v.L: d(v,v^*) = i\}\]
\[L_{2i+1} = \{v.F: d(v,v^*) = i\}\]
Hence, intuitively, each layer $L_j$ at time $t$ is carrying the wagon that $v^*$ sent at time $t-j$.
\begin{definition}[Legitimate Configuration]\label{def:legitimate} A legitimate configuration is such that
\begin{enumerate}[(a)]
    \item \label{item:leg:leader} There exists a unique leader $v^*$;
    \item \label{item:leg:layers} Each layer $L_i$ with respect to \(v^*\) is a singleton, whose unique element \(B_i\) satisfies
    \[(B_i.\id + i)\% N = B_0.\id;\]
    \item \label{item:leg:train-value} For all \(k \ge 0\) \(T = B_k, \dots, B_{k-\min\{k,N-1-B_k.\id\}}\) is a partial train and:
    \[
        \Value(T) = \left\lfloor \frac{k}{2^{B_k.\id}} \right\rfloor
    \]
\end{enumerate}
Denote \(\legitimate\) the set of legitimate configurations.
\end{definition}

\begin{lemma}
\label{lem:leg-closed}
The set \(\legitimate\) of legitimate configuration is closed. Moreover, for any configurations \(\gamma,\gamma'\) such that \(\Pr[\gamma\rightarrow \gamma']>0\), the unique leader of \(\gamma'\) is the same as that of \(\gamma\).
\end{lemma}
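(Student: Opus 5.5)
Fix $\gamma\in\legitimate$ with leader $v^*$ and layers $B_i$, and consider any $\gamma'$ with $\Pr[\gamma\to\gamma']>0$. The approach is to show directly that $\gamma'$ is legitimate with respect to the same $v^*$, and that its layers are $B'_0$ (the new wagon created at $v^*$) and $B'_{i+1}=\Add(\text{copy of }B_i,\dots)$. In words, every layer shifts by one position and its train is incremented once. The argument runs in three stages, in order.

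\textbf{Stage 1: no node changes its leader status.} I will check that $\err(v)$ is false at every node $v\neq v^*$.
\begin{itemize}
\item The local errors are excluded by \ref{item:leg:layers}. Indeed, $v.F\in L_{2d+1}$ and $v.L\in L_{2d}$ satisfy $v.L.\id=(v.F.\id+1)\%N$.
\item The flags in a train agree, and $\err_{4,5}$ are excluded because $\Value$ is bounded.
\item $\ErrSuccessor(v)$ is excluded as follows. Take a BFS parent $u$ of $v$, at distance $d-1$. Then $u.F\in L_{2d-1}$, and $L_{2d}$ has index $u.F.\id+1$, so $u\in\subc^{\text{flag}}(v)$.
\end{itemize}
Singleton layers force every neighbor of $v$ at distance $d-1$, $d$, or $d+1$ to carry wagons from layers $2d-1$, $2d+1$, or $2d+3$. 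So the only neighbor $F$-wagons with index $\Next(v.L)$ are those in $L_{2d-1}$, since $N>4$ rules out the others modulo $N$. Consequently, all successors $u$ read by $v$ carry the same wagon $B_{2d-1}$. The $\argmax$ is therefore irrelevant, and $\SuccIsMarked$ and $\IsKilled$ behave consistently: a marked head in $L_{2d-1}$ means that the whole layer structure behind it is marked.

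\textbf{Stage 2: the overflow predicates and the leader.} The overflow predicates fail by \ref{item:leg:train-value}. A carry at index $N-2$ meeting a bit $1$ at index $N-1$ would give a train value of at least $2^{N}/2^{k}$ in the appropriate normalization. But $\Value = \lfloor k/2^{\id}\rfloor$ with $k\le 2D+1<2n\le 2^N$, which makes this impossible. For $v^*$ itself, $\err$ is false because it is a leader. $\IsKilled(v^*)$ is false because no neighbor $F$-wagon lies in a layer of index $\le 0$. So $v^*$ stays the leader and runs $\WagonUpdateLeader$. Every other node $v$ has $\leader=0$ and runs $\WagonUpdate$. The state changes therefore do not depend on the random bits, except for $v^*.\rand$ and the flag of the newly created head, and uniqueness of the leader is preserved. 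This gives \ref{item:leg:leader} and the second claim.

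\textbf{Stage 3: the layer structure and the train values.} The update rules give the following:
\begin{itemize}
\item $v.F$ becomes $\Add(v.F,v.L)$, which moves layer $2d$ to $2d+1$.
\item $v.L$ becomes $\Add(v.L,u.F)$, which moves layer $2d-1$ to $2d$.
\item At $v^*$, the new $L$ is the fresh wagon with index $B_0.\id+1$.
\end{itemize}
The indices thus satisfy \ref{item:leg:layers} with the new $B'_0$. I must also handle the branch of $\WagonUpdate$ where $\SuccIsMarked$ holds but $v.L$ is unmarked and not last, which sets $v.F\gets\bot$. I will show this branch cannot occur in $\gamma$, since a marked head in $L_{2d-1}$ forces $B_{2d}$ to have index $N-1$.

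\textbf{Main obstacle.} The main difficulty is \ref{item:leg:train-value}: I must show $\Value(T'_{k+1})=\lfloor (k+1)/2^{B'_{k+1}.\id}\rfloor$. I would prove this by induction on the wagon index inside the train, using \cref{obs:train-value} and \cref{obs:add-one-bit}. The head wagon (index $0$) gets $\bit+2\carry=B.\bit+1$, giving value $k+1$. For a non-head wagon, $\Add$ absorbs the predecessor's carry. The identity $\lfloor\lfloor x/2^j\rfloor/2\rfloor=\lfloor x/2^{j+1}\rfloor$ and the decomposition $\lfloor \Value/2\rfloor=\carry+\Value(\text{rest})$ close the induction. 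The delicate points, which I expect to cost the most care, are two: matching which partial train each layer belongs to after the shift, and the boundary case of the leader's freshly created wagons, whose bit/carry are $0$ and must equal $\lfloor 0/2^{\id}\rfloor$ or $\lfloor 1/2^{\id}\rfloor$.
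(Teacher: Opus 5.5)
Your plan follows the paper's proof closely: the leader survives, no other node satisfies $\err$, $B_{2d-1}$ is the unique correct successor so the layers shift by one, and the values follow by induction on the layer index via \cref{obs:train-value} and \cref{obs:add-one-bit}. Treating the local errors and the $v.F\gets\bot$ branch of $\WagonUpdate$ explicitly even goes beyond what the paper writes. However, none of your justifications covers one case, and it is the case the paper isolates: a marked \emph{head} in an \emph{outer} layer. That means $L_{2d+1}$ or $L_{2d+3}$ for a node $v$ at distance $d\ge 1$, or $L_3$ for $v^*$. When $v.L$ is unmarked, $\SuccIsMarked(v)$ and $\subc^1(v)$ accept \emph{any} neighbouring marked head, whatever $\Next(v.L)$ is. So your mod-$N$ index comparison does not show that $B_{2d-1}$ is the only wagon $v$ can read. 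Your exclusion of the $\bot$ branch likewise only treats a marked head in $L_{2d-1}$. Your reason for $\neg\IsKilled(v^*)$ (``no neighbour $F$-wagon lies in a layer of index $\le 0$'') is not an argument about the predicate. Every neighbour of $v^*$ carries $B_3$ in its $F$ station, and $B_3$ \emph{is} a marked head three rounds after $v^*$ starts a marked train. At such times the second conjunct of $\IsKilled(v^*)$ does hold.

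The fix is one application of \ref{item:leg:train-value}. Suppose $B_{2d+j}$, with $j\in\{1,3\}$, has $\id=0$ and $\flag=1$. Since $N-1\ge 4>j$, the partial train starting at $B_{2d+j}$ reaches $B_{2d}$, so $v.L=B_{2d}$ is marked with $\id=j\notin\{0,N-1\}$. Taking $d=0$, $j=3$ gives $v^*.L.\flag=1$, hence $\neg\IsKilled(v^*)$. For $d\ge 1$ it shows that an unmarked $v.L$ never sees an outer marked head. For a marked $v.L$, $\subc^1(v)$ does check the index, so your mod-$N$ argument applies. Combined with your observation that a marked head in $L_{2d-1}$ forces $v.L.\id=N-1$, this also rules out the $\bot$ branch entirely. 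Finally, fix the slip in your $\ErrSuccessor$ step: legitimacy gives $u.F.\id=(v.L.\id+1)\%N$ for the BFS parent $u$, not $v.L.\id=u.F.\id+1$.
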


\begin{proof}
    Let \(\gamma \in \legitimate\) and \(\gamma' \in \gamma\) such that there is a transition \(\Pr[\gamma \rightarrow \gamma']>0\). We use the same notations \(v^*, L_i, B_i\) as in \Cref{def:legitimate}.

    \begin{claim}
        \(v^*\) is still a leader in \(\gamma'\).
    \end{claim}
    \begin{proof}\triangleqed
        It suffices to prove that \(\IsKilled(v^*)\) is false.
        Let \(u \in N(v^*)\), it holds that \(\gamma(u).F = B_3\). Assume \(B_3.\flag = 1\) and \(B_3.\id = 0\). \(\gamma \in \legitimate\), so \(B_3, B_2, \dots, B_{3-\min\{3,N-1\}}\) is a train. In particular, since \(N > 4\), it holds that \(v^*.L.\flag = B_0.\flag = B_3.\flag = 1\). Thus, \(\IsKilled(v^*)\) is false.
    \end{proof}

    \begin{claim}  \label{claim:leg-closed:no-other-leader}
In \(\gamma'\), there is no leader other than \(v^*\).
    \end{claim}
    \begin{proof}\triangleqed
        It suffices to prove that, for all \(u \ne v^* \in V\), \(\err(u)\) is false. Predicate \(\ErrSuccessor(u)\) states that \(u\) sees a correct ``next wagon'' among its neighbors. Let \(k = d(u,v^*)\) and \(w_0\in N(u)\) such that \(d(w_0,v^*) = k-1\). Remark that $u.L = B_{2k}$ and $u.F = B_{2k+1}$. By definition of legitimate configurations, we have \(w_0.F.\id = (u.L.\id +1)\%N\), and if \(u.L.\id < N-1\) then we also have \(w_0.F.\flag = u.L.\flag\). This means that \(w_0.F\) is always a correct next wagon for \(u\), and thus \(\ErrSuccessor(u)\) is false.

        In fact, we can prove something even stronger: \(w_0.F = B_{2k-1}\) is the \emph{only} correct next wagon for \(u\).
        Indeed, by definition of the layers, for all neighbor \(w\in N(u)\) we have \(w.F \in \{B_{2k-1},B_{2k+1},B_{2k+3}\}\).
        Remark that, since \(N > 4\), for \(j=\{1,3\}\) we have that \(B_{2k+j}.\id = (B_{2k}.\id - j)\%N \ne (B_{2k}.\id+1)\%N\).
        Thus, the only way for \(B_{2k+j}\) to be a correct next wagon for \(u\) is if \(B_{2k+j}.\id = 0\) and \(B_{2k+j}.\flag = 1\).
        But then, because \(N > 4\), \(B_{2k+j}, \dots, B_{2k}\) is a partial train (by \Cref{def:legitimate}\ref{item:leg:train-value}), so we also have \(B_{2k}.\flag = 1\) and \(B_{2k}.\id < N-1\). Thus \(B_{2k+j}\) is not a correct next wagon for \(u\).
        
        With this, we can prove that \(\ErrOverflow(u)\) is false. It suffices to prove that, if \(B_{2k}.\id = N-2\), then \(B_{2k}.\carry + B_{2k-1}.\bit < 2\). From \Cref{def:legitimate}\ref{item:leg:train-value}, since $N \geq 1+\log n$, we have:
        \begin{align*}
            \Value(B_{2k}, B_{2k-1}) &= \left\lfloor \frac{k}{2^{N-2}} \right\rfloor \\
            &\leq \left\lfloor \frac{2n-1}{n/2}\right\rfloor\\
            &< 4
        \end{align*}
        But, by definition, we also have \(\Value(B_{2k},B_{2k-1}) \geq 2\cdot B_{2k}.\carry + 2\cdot B_{2k-1}.\bit\). Thus, \(B_{2k}.\carry + B_{2k-1}.\bit < 2\), and \(\ErrOverflow(u)\) is false.
    \end{proof}

    The fact that \(\gamma'\) satisfies \Cref{def:legitimate}\ref{item:leg:layers} is a direct consequence from the remark that the only correct next wagon for a node \(u\) at distance \(k\) from \(v\) is \(B_{2k-1}\). Denote \(B'_j\) the single element of the layers \(L'_i\) with respect to \(v^*\) in \(\gamma'\).

    It remains to show that \(\gamma'\) satisfies \Cref{def:legitimate}\ref{item:leg:train-value}. We prove it by induction on \(k\), it is immediate for \(k=0\). Let \(k > 0\), and assume \Cref{def:legitimate}\ref{item:leg:train-value} is satisfied in \(\gamma'\) for all \(j < k\). As mentioned in the proof of \Cref{claim:leg-closed:no-other-leader}, for all \(j>0\), if \(j=2i\) the only correct ``next wagon'' for nodes containing \(B_{2i}\) is \(B_{2i-1}\). If \(j = 2i+1\), then nodes containing \(B_{2i+1}\) also contain \(B_{2i}\), which is the only correct ``next wagon''. In both case, this means that \(B'_j = \Add(B_j, B_{j-1})\) (see \Cref{def:merge-and-add}). Then, there are three cases:
    \begin{itemize}
        \item \(B_{j-1}.\id \not\in\{0,N-1\}\). Then:
        \begin{align*}
            \Value(B'_j, \dots) &= (B'_j.\bit + 2 \cdot B'_j.\carry) + 2 \cdot \Value(B'_{j-1}, \dots)
            & \text{(\Cref{obs:train-value})}\\
            &= (B_j.\carry + B_{j-1}.\bit)
            + 2 \cdot \left\lfloor \frac{j-1}{2^{B'_{j-1}.\id}} \right\rfloor
            & \text{(\Cref{obs:add-one-bit})}\\
            &= B_j.\carry + \Value(B_{j-1},\dots)\%2
            + 2 \cdot \left\lfloor \frac{j-1}{2^{1+B_{j-1}.\id}} \right\rfloor
            & \text{(\Cref{obs:train-value})}\\
            &= B_j.\carry + \Value(B_{j-1},\dots)\% 2
            + 2 \cdot \left\lfloor \frac{\Value(B_{j-1},\dots)}{2} \right\rfloor \\
            &= B_j.\carry + \Value(B_{j-1}, \dots) \\
            &= \left\lfloor \frac{\Value(B_j, \dots)}{2} \right\rfloor
            & \text{(\Cref{obs:train-value})}\\
            &= \left\lfloor \frac{j}{2^{1+B_j.\id}} \right\rfloor \\
            &= \left\lfloor \frac{j}{2^{B'_j.\id}} \right\rfloor
        \end{align*}
        \item \(B_{j-1}.\id = 0\). Then, the situation is the same but we always add \(1\), not taking into account \(B_j.\carry\):
        \begin{align*}
            \Value(B'_j, \dots) &= (B'_j.\bit + 2 \cdot B'_j.\carry) + 2 \cdot \Value(B'_{j-1}, \dots) \\
            &= (B_{j-1}.\bit+1) + 2 \cdot \left\lfloor \frac{j-1}2\right\rfloor
            & \text{(\Cref{obs:add-one-bit})}\\
            &= (j-1)\% 2 + 1 + 2 \cdot \left\lfloor \frac{j-1}2\right\rfloor
            & \text{(\Cref{obs:train-value})}\\
            &= j
        \end{align*}
        \item \(B_{j-1}.\id = N-1\). Then, the expression of \(\Value(B'_j)\) is simplified:
        \begin{align*}
            \Value(B'_j, \dots) &= B'_j.\bit \\
            &= B_j.\carry + B_{j-1}.\bit \\
            &= \left\lfloor \frac{\Value(B_j, \dots)}2 \right\rfloor \\
            &= \left\lfloor \frac{j}{2^{1+(N-2)}} \right\rfloor
        \end{align*}
    \end{itemize}
\end{proof}

Now that we proved that legitimate configurations are closed, it remains to show that we eventually reach a legitimate configuration. The next lemma states that, if a leader \(v^*\) emits a marked train while there is no other marked wagon in the system, and if no other leaders emit marked trains for a sufficiently long time, we reach a legitimate configuration. Intuitively, the BFS structure of legitimate configurations will grow around \(v^*\) as the marked train it emitted advances and eliminates the other leaders. After \(k\) rounds, the BFS structure reaches the first \(k\) layers with respect to \(v^*\).

\begin{lemma}
    \label{lem:leg-grow}
    Consider an execution \(\execution\) and let \(t> 0\) be a time such that there is no marked wagon at time \({t-1}\). Assume that some leader \(v^*\) starts emitting a marked train at time \({t}\) (meaning \(v^*.L.\flag_{t} = 1, v^*.L.\id_t = 0\)), and that no other leader emits a marked wagon in the next \(k\) rounds, for some \(k \leq 2\epsilon(v^*)+1\) (where \(\epsilon(v^*)\) denotes the eccentricity of \(v^*\), ie. \(\epsilon(v^*) = \max_{u \in V} d(u,v^*)\)). Then, at time \({t+k}\):
    \begin{enumerate}[(i)]
        \item \label{item:leg-grow:leader} for all \(u \neq v\) such that \(d(u,v) \leq \frac k 2\), \(u\) is not a leader at time \({t+k}\);
        \item \label{item:leg-grow:layers} for all \(i \le k\), each layer \(L_i\) with respect to \(v^*\) is a singleton, whose unique element \(B_i\) satisfies
        \[
        (B_i.\id+i)\%N = B_0.\id;
        \]
        \item \label{item:leg-grow:train-value} for all \(i \le k\), \(T= B_i, \dots, B_{i-\min\{i, N-1-B_i.\id\}}\) is a partial train and:
        \[
        \Value(T) = \left\lfloor \frac{i}{2^{B_i.\id}}\right\rfloor;
        \]
        \item \label{item:leg-grow:no-marked} for all \(i > k\) there are no marked wagons in layer \(L_i\) with respect to \(v^*\);
        \item \label{item:leg-grow:marked} \(B_k.\id = 0\) and \(B_k.\flag = 1\).
    \end{enumerate}
    In particular, if \(k=2\epsilon(v^*)+1\), \(\gamma_{t+k} \in \legitimate\).
\end{lemma}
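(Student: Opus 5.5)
I will prove all five items simultaneously by induction on $k$, for $0 \le k \le 2\epsilon(v^*)+1$, with the one-step analysis modeled on the proof of \Cref{lem:leg-closed}. Throughout, write $B_i$ for the (claimed) unique element of layer $L_i$ with respect to $v^*$ at time $t+k$. The base case $k=0$ is direct. Items \ref{item:leg-grow:layers} and \ref{item:leg-grow:train-value} only concern $L_0 = \{v^*.L\}$, where $B_0.\id = 0$ and the value is $0$. Item \ref{item:leg-grow:marked} is exactly the hypothesis. Item \ref{item:leg-grow:no-marked} holds because the only marked wagon at time $t$ is the one just emitted by $v^*$: there were no marked wagons at time $t-1$, and a non-leader can only produce a marked wagon by copying one from $\subc^1$. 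Item \ref{item:leg-grow:leader} is vacuous except for $v^*$ itself.

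For the inductive step from $k$ to $k+1$, I will split the layers into three regions: the \emph{old interior} $i \le k$, the \emph{frontier} $i = k+1$, and the \emph{exterior} $i > k+1$.

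\textbf{Old interior.} Here the argument of \Cref{claim:leg-closed:no-other-leader} and the three-case value computation of \Cref{lem:leg-closed} carry over verbatim. It shows that every node holding $B_{2j}$ with $2j \le k$ has $B_{2j-1}$ as its unique correct next wagon. The possible spurious candidates are wagons $B_{2j+1}$, $B_{2j+3}$, and also exterior wagons, which are unmarked by item \ref{item:leg-grow:no-marked}. Because $B_{2j}$ is marked (it belongs to the marked train behind $B_k$), $\SuccIsMarked$ holds and unmarked candidates are ignored. Hence $B'_{j+1} = \Add(B_j, B_{j-1})$, and the value identities follow from \Cref{obs:add-one-bit} and \Cref{obs:train-value} exactly as before. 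The bound $\lfloor i/2^{N-2}\rfloor < 4$ for $i \le 2\epsilon(v^*)+1 \le 2n-1$ again excludes $\ErrOverflow$, so no interior node becomes a leader.

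\textbf{Frontier.} The new layer $L_{k+1}$ must receive the marked head $B_k$ (with $\id=0$, $\flag=1$). If $k+1 = 2i+1$, the wagon moves internally from $L$ to $F$ at the same node, as in \Cref{def:wagon-update}. If $k+1 = 2i$, a node at distance $i$ sees a neighbor whose $F$ is the marked head. Then $\SuccIsMarked$ holds, so the node overwrites whatever unmarked train it was carrying and takes the head from $\subc^1$. This also makes $\IsKilled$ true for it if it was a leader, which gives item \ref{item:leg-grow:leader} for distance $\lceil (k+1)/2\rceil$. For item \ref{item:leg-grow:marked}, I still need to check that $B'_{k+1}.\id = 0$ and that its value is $\lfloor (k+1)/2^0\rfloor = k+1$, which is the $B_{j-1}.\id=0$ case of the computation.

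\textbf{Exterior.} For item \ref{item:leg-grow:no-marked} I will argue that no node at layer $> k+1$ can acquire a marked wagon. Such a node would need a marked neighbor-successor, but all marked wagons lie in layers $\le k$, which are at distance at least $2$ half-steps away, and no other leader emits a marked wagon, by hypothesis.

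\textbf{Main obstacle.} The frontier step is where I expect the difficulty. There I must check that the layer is a \emph{singleton}: every node at distance $i$ sees the head simultaneously, since BFS layers are synchronous, so all nodes in the layer do agree. I must also rule out an $\err$ firing at a frontier node, whether $\ErrSuccessor$ from a marked-expectation mismatch, or a freshly created leader at the frontier in the same round. The approach is to observe that $\err$ is evaluated before $\IsKilled$, so I will need to show that the frontier node's successor set $\subc^1$ is nonempty and that overflow cannot trigger because values are small. Finally, at $k = 2\epsilon(v^*)+1$ every layer is covered, and items \ref{item:leg-grow:leader}--\ref{item:leg-grow:train-value} are precisely \Cref{def:legitimate}, which gives $\gamma_{t+k}\in\legitimate$.
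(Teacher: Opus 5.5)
Your architecture matches the paper's: induction on $k$, the interior handled by the ``unique correct next wagon'' analysis of \Cref{lem:leg-closed}, frontier leaders removed by $\IsKilled$, and item (iv) by locality of candidate successors. Two steps do not go through as written.

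The first is repairable. You exclude exterior candidates for $B_{2j}$, $2j\le k$, on the grounds that $B_{2j}$ ``belongs to the marked train behind $B_k$''. By (ii), (iii) and (v), that train consists only of $B_k,\dots,B_{k-N+1}$. Deeper wagons belong to later trains of $v^*$, which are typically unmarked, so $\SuccIsMarked$ may be false for them and your reason does not apply. What actually works, and what the paper does, is locality. A node holding $B_{2j}$ only sees $F$-stations in $L_{2j-1}\cup L_{2j+1}\cup L_{2j+3}$. Exterior candidates therefore exist only if $2j+3>k$, i.e.\ $k-2j<3<N-1$. Exactly in that case $B_{2j}$ lies in the head train, is marked, and has index $<N-1$. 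Otherwise all candidates lie in layers $\le k$ and the legitimate-configuration argument applies verbatim.

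The second is a genuine gap: the frontier step you defer cannot be closed along the route you indicate. Let $k+1=2i$ and let $v$ be a non-leader with $d(v,v^*)=i$.

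\begin{itemize}
\item $\ErrSuccessor(v)$ is false, since $\subc^1(v)$ contains the neighbours holding $B_k$.
\item $\ErrOverflow^L(v)$ is false because its conjunct $v.L.\flag=\SuccIsMarked(v)$ fails: $v.L$ is unmarked by (iv) while $\SuccIsMarked(v)$ holds. It is not false because values are small.
\item $\ErrOverflow^F(v)$ reads only $v.F\in L_{k+2}$ and $v.L\in L_{k+1}$. These are wagons of an arbitrary unmarked train, about which the induction hypothesis says nothing except (iv).
\end{itemize}

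Suppose $v.F.\id=N-2$, $v.F.\carry=1$ and $v.L.\bit=1$, i.e.\ an old unmarked train is overflowing exactly at that node and time. A suitable initial configuration already produces this for $k=1$. Then $\err(v)$ holds and $v$ executes \NewLeader{} instead of taking the head. At time $t+k+1$ it is a leader at distance $(k+1)/2$ whose $L$-station holds an unmarked wagon of index $1$, violating (i), (ii) and (v).

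The paper's proof merely asserts ``$\err(v)$ is false'' at this point, so you have found a real weak spot shared with the paper. No bound on train values can repair it. The lemma needs an extra hypothesis (no unmarked overflow at the frontier) or a modified rule for a node whose unmarked train is being cut by a marked head.
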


\begin{proof}
    We use the same notations as in the lemma. We prove the result by induction on \(k\). The initialization is immediate for \(k=0\). We assume the result is true for \(k\) and we prove it for \(k+1\).

    \begin{claim}\label{claim:leg-grow:next-wagon}
        Let \(i \in [1, k+1]\) and \(B' \in L_i\). The only correct next wagon for \(B'\) is \(B_{i-1}\).
    \end{claim}
    \begin{proof}\triangleqed
        If \(B' = v.F\) for some \(v \in V\) (ie. \(i\%2 = 1\)), the only correct next wagon is \(v.L \in L_{i-1}\) so \(v.L = B_{i-1}\). If \(B' = v.L\) for some \(v \in V\) (ie. \(i\%2 = 0\)), the candidate next wagons are \(\{u.F \mid u \in N(v)\}\). Observe that it is included in \(L_{i-1} \cup L_{i+1} \cup L_{i+3}\). If \(i+3 > k\) (resp. \(i+1 > k\)), then \(k-i < 3 < N-1\). In particular:
        \begin{itemize}
            \item by \ref{item:leg-grow:marked} and \ref{item:leg-grow:train-value}, \(B'\) is in the same train as \(B_k\) which is marked, so it is marked;
            \item \(B'\) is marked and \(B'.\id < N-1\), so correct next wagons are necessarily marked;
            \item by \ref{item:leg-grow:no-marked}, \(L_{i+3}\) (resp. \(L_{i+1}\)) does not contain any marked wagon.
        \end{itemize}
        This implies that correct next wagons for \(B'\) are included in layers \(L_j\), with \(j \le k\). But these layers satisfy \ref{item:leg-grow:leader}\ref{item:leg-grow:layers}\ref{item:leg-grow:train-value}, so \(B'\) essentially behaves as in a legitimate configuration. By the same analysis as in the proof \Cref{claim:leg-closed:no-other-leader}, the only correct next wagon for \(B'\) is \(B_{i-1}\).
    \end{proof}

    We now prove each item of the lemma.
    \begin{itemize}
        \item[\ref{item:leg-grow:leader}] If \(k+1\) is even then, for each node \(v\) such that \(d(v,v^*) = \frac{k+1}2\), \(v.L\) is unmarked (by \ref{item:leg-grow:no-marked}) and the only correct next wagon is \(B_k\) (by \Cref{claim:leg-grow:next-wagon}) which satisfies \(B_k.\flag = 1\) and \(B_k.\id = 0\) (by \ref{item:leg-grow:marked}). This implies that \(\IsKilled(v)\) is true and \(\err(v)\) is false, so \(v\) is not a leader at time \({t+k+1}\). For all other nodes, or if \(k+1\) is odd, the situation is the same as in a legitimate configuration, one may refer to the proof of \Cref{lem:leg-closed}.
        \item[\ref{item:leg-grow:layers}\ref{item:leg-grow:train-value}] The situation is the same as in a legitimate configuration, one may refer to the proof of \Cref{lem:leg-closed}.
        \item[\ref{item:leg-grow:no-marked}] Since no marked wagon is emitted by a leader other than \(v^*\), the only way for a station in \(L_i, i>k+1\) to contain a marked wagon at time \({t+k+1}\) is if it had a correct next marked wagon at time \({t+k}\). But, as mentioned in the proof of \Cref{claim:leg-grow:next-wagon}, the candidate next wagons are included in \(L_{i-1} \cup L_{i+1} \cup L_{i+3}\), which did not contain any marked wagon at time \(t+k\) by \ref{item:leg-grow:no-marked}.
        \item[\ref{item:leg-grow:marked}] This is a direct consequence of \Cref{claim:leg-grow:next-wagon} for \(i = k+1\), in conjunction with \ref{item:leg-grow:marked}.
    \end{itemize}
\end{proof}

In order to show that the situation described in \Cref{lem:leg-grow} happens, two things are required: the marked wagons must disappear, and, if there is no leader, one must be created. Both are consequences of the values of the trains reaching their maximum, \(2^N\). Before anything else, we need to formalize what are the trains present in a configuration \(\gamma\). Roughly, a train \(T\) is present in \(\gamma\) if there is a series of adjacent stations which contain the consecutive wagons of \(T\). In the formal definition, we need to differentiate the two kinds of stations.

\begin{definition}
    Let \(\gamma \in \Gamma\). The set \(\Trains(\gamma)\) is defined as the set of (complete) trains \(T_0\cdots T_{N-1}\) such that one of the following is true:
    \begin{itemize}
        \item there exists a sequence of nodes \(v_0, \dots, v_{\lfloor(N-1)/2\rfloor}\in V\) such that
        \begin{align*}
        v_k.F &= T_{2k} & \text{for all } k \text{ such that } 2k < N\\
        v_k.L &= T_{2k+1} & \text{for all } k \text{ such that } 2k+1 < N
        \end{align*}
        \item there exists a sequence of nodes \(v_0, \dots, v_{\lfloor(N-1)/2\rfloor}\in V\) such that
        \begin{align*}
        v_k.F &= T_{2k-1} & \text{for all } k \text{ such that } 2k-1 < N\\
        v_k.L &= T_{2k} & \text{for all } k \text{ such that } 2k < N
        \end{align*}
    \end{itemize}
    In both cases, \(v_k.F\) and \(v_k.L\) are called the stations containing \(T_{2k}, T_{2k+1}\) or \(T_{2k-1}\).

    We define \(\Trains^i(\gamma)\) as the subset of \(\Trains(\gamma)\) containing only marked trains (for \(i=1\)) or unmarked trains (for \(i=0\)).
\end{definition}

The next lemma shows that the value of each train is incremented at each round: this is the core idea of trains behaving as traveling counters. There are however two restrictions. First, we only show that the minimum value among trains is incremented by at least one. Second, we need to assume that no leader emits wagons with the same flag as the trains we are considering, otherwise a new train with very small value may appear. This is still sufficient to prove \Cref{lem:leader-creation,lem:marked-disparition} which state, respectively, that a leader is eventually created if there are no leaders, and that marked wagons eventually disappear if no new marked wagons are created by leaders.

\begin{lemma}
    \label{lem:train-incr}
    Consider an execution \(\execution\). Let \(t \in \bbN\) and \(i\in\{0,1\}\) and assume no leaders contain wagons \(B\) such that \(B.\flag = i\) at time \({t+1}\). Then:
    \[
        \min_{T'\in\Trains^i(\gamma_{t+1})} \Value(T') \geq \min_{T \in \Trains^i(\gamma_t)} \Value(T) + 1
    \]
\end{lemma}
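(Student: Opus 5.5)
The approach is backward tracing. Fix a train \(T' = T'_0\cdots T'_{N-1} \in \Trains^i(\gamma_{t+1})\). We want a train \(T \in \Trains^i(\gamma_t)\) with \(\Value(T') \geq \Value(T)+1\). Taking the minimum over \(T'\) then gives the statement. If \(\Trains^i(\gamma_t)\) is empty, the right-hand side is \(+\infty\), so we must also show that \(\Trains^i(\gamma_{t+1})\) is empty. The same backward construction gives this, because it builds a train at time \(t\) out of any train at time \(t+1\).

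The construction follows where each wagon came from. By hypothesis, no node holding a flag-\(i\) wagon at time \(t+1\) is a leader at time \(t+1\). Such a node also did not just become a leader through \NewLeader, because that function produces unmarked wagons \((0,1,0,0),(1,0,0,0)\) and the node would be a leader at time \(t+1\). So every station containing a wagon \(T'_j\) was updated by \WagonUpdate. I would first check that the case \(v.F \gets \bot\) cannot yield a wagon of \(T'\), since the wagons of \(T'\) are nonempty. Inspecting \WagonUpdate\ then gives two cases.
\begin{itemize}
\item If \(T'_j\) sits in \(v.F\), it equals \(\Add(v.F_t, v.L_t)\).
\item If \(T'_j\) sits in \(v.L\), it equals \(\Add(v.L_t, u.F_t)\) for some \(u\) in \(\subc^{\SuccIsMarked(v)}(v)\) with maximal bit.
\end{itemize}
In both cases \(T'_j\) keeps the index and flag of the old wagon, and its predecessor source is the next wagon \(T_{j-1}\), which lies in the adjacent station along the path. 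So the stations holding \(T'\) at time \(t+1\) determine a shifted path of stations at time \(t\). That path carries wagons \(T_0,\dots,T_{N-1}\) with \(T_j\) equal to the old content of the station now holding \(T'_j\) (indices and flags preserved), plus one extra wagon \(T_{-1}\) ahead of the head, namely the source of \(T'_0\).

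The subtle point is the head. If \(T_{-1}.\id = N-1\) (a train boundary), then \(\Add\) takes the ``\(+1\)'' branch for \(T'_0\) exactly when the next wagon has \(\id = 0\). I will need to align indices carefully. I would set it up so that \(T\) is the old train occupying the same stations, \(T'_j = \Add(T_j, T_{j-1})\), and the \(+1\) enters at the wagon whose predecessor has index \(0\), as in the three-case computation in the proof of \Cref{lem:leg-closed}.

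With this correspondence, the value bound is the same telescoping computation as in the proof of \Cref{lem:leg-closed}, with equalities replaced by inequalities. By \Cref{obs:add-one-bit}, each \(T'_j.\bit + 2T'_j.\carry\) equals \(T_{j-1}.\bit + T_j.\carry\), or \(T_{-1}.\bit + 1\) at the head. Summing with weights \(2^j\) via \Cref{obs:train-value}, the carries shift down one position and the bits shift up, giving \(\Value(T') = \Value(\text{shifted }T) + 1 \ge \Value(T)+1\). The last wagon's carry vanishes because local errors and \(\ErrOverflow\) are excluded, otherwise the node would have become a leader.

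I expect the main obstacle to be proving that the traced-back sequence really is an element of \(\Trains^i(\gamma_t)\). This means showing the predecessor stations form a valid alternating \(F/L\) path with consecutive indices and a common flag \(i\). The delicate case is the \(\argmax\) choice in \(\subc\) when several neighbors offer a candidate next wagon. Here I would use the \(\argmax\) on the bit to argue that the chosen candidate has a bit at least that of any alternative. I would then extend the train from that neighbor: its \(F\) and \(L\) are consecutive by the absence of \(\err_2,\err_3\), and iterating the argument yields the full path. Any ambiguity only increases the value, so the inequality, rather than equality, still holds.
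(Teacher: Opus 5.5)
Your overall strategy (trace \(T'\) back one round, apply \Cref{obs:add-one-bit}, telescope with \Cref{obs:train-value}) is the paper's. However, the correspondence you set up between \(T'\) and \(T\) is wrong, and the \(+1\) computation fails with it.

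In this algorithm wagons move. The pseudocode of \(\Add(B,B')\) only writes bit and carry, but the intended semantics is the one used in the proof of \Cref{lem:leg-closed}, where \(B'_j=\Add(B_j,B_{j-1})\) has index \(B_{j-1}.\id\). After \(\Add(B,B')\) the station holds the incoming wagon \(B'\), with value \(B'.\bit+B.\carry\) (or \(B'.\bit+1\) if \(B'.\id=0\)). The carry is the one left behind by the departing wagon \(B\). So the station holding \(T'_j\) at time \(t+1\) held the wagon of index \(j-1\) at time \(t\). Your claim that indices and flags are preserved is therefore false, and the extra wagon is needed behind the tail, not ahead of the head.

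The right choice is the following. Take \(T_0,\dots,T_{N-2}\) to be the old contents of the stations of \(T'_1,\dots,T'_{N-1}\), and \(T_{N-1}\) the next wagon selected by the (non-leader) station of \(T'_{N-1}\). The old content of the station of \(T'_0\) is irrelevant, since its carry is discarded when an index-\(0\) wagon enters. Membership \(T\in\Trains^i(\gamma_t)\) follows from \(\neg\err_2,\neg\err_3\) and the definitions of \(\subc^0,\subc^1\). With this \(T\),
\[
T'_0.\bit+2\,T'_0.\carry\ \ge\ T_0.\bit+1,\qquad T'_j.\bit+2\,T'_j.\carry\ \ge\ T_j.\bit+T_{j-1}.\carry\quad(1\le j\le N-1),
\]
with the bit taken from the incoming wagon and the carry from the departing one. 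Weighting by \(2^j\) and summing gives \(\Value(T')\ge 1+\Value(T)-2^N\,T_{N-1}.\carry=\Value(T)+1\), using \(T_{N-1}.\carry=0\) (no \(\err_4,\err_5\) at time \(t\)). This is exactly what the reverse induction in \Cref{claim:train-incr:advance-divide} carries out.

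Your identity \(T'_j.\bit+2\,T'_j.\carry=T_{j-1}.\bit+T_j.\carry\) swaps the roles of bit and carry. Inserted into \(\sum_j 2^j(\cdot)\), it doubles the weight of every bit and halves that of every carry, so it cannot telescope to \(\Value(T)+1\). The passage through the value of a ``shifted \(T\)'' is neither defined nor justified.

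The inequalities above are also where the \(\argmax\) genuinely enters. An \(L\)-station of \(T'_j\) may have drawn from a neighbour other than the one holding \(T_j\), but \(T_j\) is a candidate, so the selected bit is at least \(T_j.\bit\). Do not instead build \(T\) by following the sources actually selected and extending from them. The selected sources of two consecutive stations of \(T'\) need not occupy consecutive stations: \(v.L\) may draw from \(u.F\) while \(T'\) continues at \(u'.F\) with \(u'\neq u\). Extending from \(u\) then produces a path unrelated to the later wagons of \(T'\), and such a path need not even exist. Only one wagon, \(T_{N-1}\), has to come from outside the stations of \(T'\), so no iteration is needed.
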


\begin{proof}
    We fix an execution \(\execution\), at time \(t \in \bbN\) and a flag \(i \in \{0,1\}\). We assume no leaders contain wagons \(B\) such that \(B.\flag = i\) at time \({t+1}\).

    Let \(T' \in \Trains^i(\gamma_{t+1})\). Necessarily, at time \(t\), for each \(1 \le k \le N-1\), the station containing \(T'_k\) at time \(t+1\) stored, at time \(t\), a wagon \(T_{k-1}\) such that \(T_{k-1}.\id = T_k.\id -1 = k-1\). Moreover, since the station containing \(T'_{N-1}\) is not a leader (since, at time \({t+1}\), no leaders contain wagons \(B\) such that \(B.\flag = i\)). Thus, at time \(t\), it had a correct ``next wagon'' \(T_{N-1}\) in a adjacent station. By construction, we have \(T \in \Trains^i(\gamma_t)\). We now show that \(\Value(T') \ge \Value(T)+1\).

    \begin{claim}
        \label{claim:train-incr:advance-divide}
        Let \(k < N-1\). Then:
        \[
        \Value(T'_{k+1}, \dots, T'_{N-1}) \ge \left\lfloor\frac{\Value(T_k, \dots, T_{N-1})}2\right\rfloor
        \]
    \end{claim}
    \begin{proof}\triangleqed
        By (reversed) induction on \(k\). For \(k=N-2\):
        \begin{align*}
            \Value(T'_{N-1}) &= T'_{N-1}.\bit \\
            &\ge T_{N-2}.\carry + T_{N-1}.\bit &\text{(\Cref{obs:add-one-bit})}\\
            &= \left\lfloor\frac{\Value(T_{N-2}, T_{N-1})}2\right\rfloor&\text{(\Cref{obs:train-value})}
        \end{align*}
        Now let \(k < N-2\). Assume the claim is true for \(k+1\), we prove it for \(k\).
        \begin{align*}
            \Value(T'_{k+1}, \dots, T'_{N-1}) &= T'_{k+1}.\bit + 2\cdot T'_{k+1}.\carry + 2\cdot\Value(T'_{k+2}, \dots, T'_{N-1}) \\
            &\ge T_k.\carry + T_{k+1}.\bit + 2\cdot \left\lfloor \frac{\Value(T_{k+1}, \dots, T_{N-1})}2\right\rfloor & \text{(Induction)}\\
            &= T_k.\carry + \Value(T_{k+1}, \dots, T_{N-1}) \\
            &= \left\lfloor \frac{\Value(T_k, \dots T_{N-1})}2 \right\rfloor & \text{(\Cref{obs:train-value})}
        \end{align*}
    \end{proof}

    Finally, observe that:
    \begin{align*}
        \Value(T') &= T'_0.\bit + 2 \cdot T'_0.\carry + 2 \cdot \Value(T'_1, \dots, T'_{N-1}) \\
        &\ge 1 + T_0.\bit + 2\cdot \left\lfloor\frac{\Value(T_0, \dots, T_{N-1})}2\right\rfloor & \text{(\Cref{claim:train-incr:advance-divide})} \\
        &= 1 + \Value(T)
    \end{align*}
\end{proof}

The next lemma states that, in configurations without any leader, a leader is eventually created. The proof relies on the fact that, by \Cref{lem:train-incr}, the values of the trains eventually reach \(2^N\), at which point the overflow triggers \(\ErrOverflow\) which in turn creates a leader.

\begin{lemma}
    \label{lem:leader-creation}
    Consider an execution \(\execution\) and let \(t\in \bbN\). Then, for some integer \(k < 2^N + N\), there is at least one leader at time \({t+k}\).
\end{lemma}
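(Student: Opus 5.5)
We argue by contradiction: assume that no node is a leader at any time \(t, t+1, \dots, t+k\) for all \(k < 2^N+N\). Leaders are only created through \(\err\), so this means \(\err(v)\) is false at every node \(v\) during this whole window. Since no leader exists, the hypothesis of \Cref{lem:train-incr} holds trivially for both flags \(i\in\{0,1\}\) at every time in the window. Hence the minimum value of trains in \(\Trains^i(\gamma_{t+s})\) increases by at least one each round. A train's value is at most \(2^N-1\): by \Cref{obs:train-value} and the invariant that the last wagon has no carry, \(\Value\) is just the \(N\)-bit string. Therefore, after at most \(2^N\) rounds, the set \(\Trains^0 \cup \Trains^1\) must be empty, i.e.\ no complete train exists anywhere.

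The second step is to show that, without leaders and without errors, a complete train must in fact exist within the first \(N\) rounds of the window. Take any node \(v\) at time \(t+s\). Since \(\err_1(v)\) is false (local errors vanish after one round), \(v.L \neq \bot\). Since \(\ErrSuccessor(v)\) is false, some neighbor \(u\) carries a correct next wagon, and that wagon moves into \(v.L\) at the next round. Now trace backwards over \(N\) rounds the chain of stations each wagon was copied from. The station holding index \(N-1\) was filled from a neighbor holding index \(N-2\), which in turn was filled from one holding index \(N-3\), and so on. This gives a sequence of adjacent stations (alternating \(F\)/\(L\) exactly as in the definition of \(\Trains\)) carrying wagons with indices \(0,\dots,N-1\) and a consistent flag. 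Flag consistency follows from \(\err_3\) and from the \(\subc^1/\subc^0\) selection rule. Such a sequence is a complete train.

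The cleanest way to organize this is forward rather than backward. Pick a node \(v\) and follow successive ``next wagon'' providers, \(v \to u \to \dots\), for \(N\) steps. Because \(\ErrSuccessor\) is false everywhere, this path exists, and it yields a train in \(\gamma_{t+s}\) for every \(s \ge N\) or so. Combining this with the first step: for \(k\) in the range \([N, 2^N+N)\) there is always a train, yet its minimal value strictly increases every round, which is impossible beyond \(2^N\) rounds. This contradiction proves the lemma.

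The main obstacle is the careful handling of overflow. Once the value would exceed \(2^N-1\), the true mechanism is that \(\ErrOverflow^L\) or \(\ErrOverflow^F\) fires: the carry reaches index \(N-2\) while the next bit is \(1\), equivalently \(\err_4\)/\(\err_5\) would occur. One must check that the inequality chain in \Cref{claim:train-incr:advance-divide} does not silently wrap around. This holds because the falsity of \(\ErrOverflow\) guarantees \(T_{N-2}.\carry + T_{N-1}.\bit \le 1\), so values never wrap and stay bounded by \(2^N-1\). I would also need to verify that the successor path genuinely gives wagon indices increasing by one with the correct station alternation, possibly wrapping across index \(0\). In that case I would choose the window so that it starts at a wagon of index \(0\), using \(\err_2\) to ensure that the \(F\)/\(L\) indices of a node are consistent.
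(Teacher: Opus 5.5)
\paragraph{The gap: the value bound in your first step is false.} You claim $\Value(T)\le 2^N-1$, but this does not hold. By \Cref{def:train-value}, $\Value(T)=\sum_j 2^j(T_j.\bit+2\,T_j.\carry)$ counts every pending carry. The invariant you invoke only forbids a carry on the last wagon, so a complete train can have value up to $2^{N+1}-3$.

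The absence of \ErrOverflow{} in a given round does not rescue the bound either. That predicate only inspects the carry of the wagon of index $N-2$ and the bit of its successor. Take a train with $T_{N-3}.\carry=T_{N-2}.\bit=T_{N-1}.\bit=1$ and all other bits and carries equal to $0$. Its value is exactly $2^N$, yet \ErrOverflow{} is false at the node holding $T_{N-2}$ in that round. The overflow becomes visible only after the carry has moved up one position.

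Your last paragraph names the right mechanism, but treats it as instantaneous. So ``the minimal value is at least $2^N$ at time $t+2^N$'' contradicts nothing by itself. Moreover, any pure counting argument using a correct a priori bound on values needs well over $2^N+N$ rounds.

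\paragraph{The missing idea: a forward look-ahead.} This is exactly what the additive $N$ in the statement pays for. Once \Cref{lem:train-incr} provides a train $T$ with $\Value(T)\ge 2^N$ at time $t+2^N$, follow it forward. After $N-2$ further rounds, the station that held $T_0$ holds a wagon $T'_{N-2}$, and its correct next wagon $T'_{N-1}$ is in a neighboring station. Iterating \Cref{claim:train-incr:advance-divide} gives $\Value(T'_{N-2},T'_{N-1})\ge\lfloor\Value(T)/2^{N-2}\rfloor\ge 4$, i.e.\ $T'_{N-2}.\carry+T'_{N-1}.\bit\ge 2$. Hence \ErrOverflow{} holds at time $t+2^N+N-2$, and a leader exists at time $t+2^N+N-1$.

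This tracking requires the train to survive those $N-2$ rounds, which can fail for unmarked trains. A non-leader whose $L$-wagon is unmarked and not last sets $v.F\gets\bot$ when it sees a marked head. The paper therefore uses a marked train when $\Trains^1(\gamma_{t+2^N})\neq\emptyset$. Otherwise, following correct next wagons from a marked wagon of index $0$ would build a complete marked train. So some marked wagon would have no correct next wagon, and \ErrSuccessor{} would create a leader. Hence no marked head exists, none can be created without leaders, and unmarked trains cannot be cut.

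Your purely backward use of \Cref{lem:train-incr} never meets this issue, but it also cannot produce the overflow.

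\paragraph{Your second step.} The claim that a train exists whenever \ErrSuccessor{} fails everywhere is fine and matches the paper's first case. However, it holds in any single configuration and needs no $N$-round warm-up. The extra $N$ rounds are needed for carries to propagate, not for trains to exist.
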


\begin{proof}
    If, for some \(k < 2^N + N\), we have \(\Trains(\gamma_{t+k}) = \emptyset\), then necessarily there is a node \(v\in V\) that does not have any correct ``next wagon''. In particular, \(\ErrSuccessor(v)\) is true, and \(v\) will become a leader in the next round. We now assume \(\Trains(\gamma_{t+k}) \ne \emptyset\), for all \(k < 2^N+N\).
    
    Assume by contradiction that there is no leader in any \(\gamma_{t+k}, k \le 2^N+N\). By \Cref{lem:train-incr}, we have:
    \[
    \min_{T \in \Trains(\gamma_{t+2^N})} \Value(T) \ge 2^N
    \]
    Assume for now that \(\Trains^1(\gamma_{t+2^N})\ne \emptyset\) and let \(T = (T_0\cdots T_{N-1}) \in \Trains^1(\gamma_{t+2^N})\). Consider the station containing \(T_0\). After \(N-2\) rounds, it will contain a wagon \(T'_{N-2}\), and a neighboring station will contain a wagon \(T'_{N-1}\) such that \(T'_{N-2}, T'_{N-1}\) form a partial train. By iterating \Cref{claim:train-incr:advance-divide}, we have that \(\Value(T'_{N-2},T'_{N-1}) \ge \left\lfloor\frac{\Value(T)}{2^{N-2}}\right\rfloor\). Then:
    \begin{align*}
        T'_{N-2}.\carry + T'_{N-1}.\bit &= \left\lfloor \frac{\Value(T'_{N-2}, T'_{N-1})}2 \right\rfloor & \text{(\Cref{obs:train-value})}\\
        &\ge \left\lfloor \frac{\Value(T)}{2\cdot 2^{N-2}} \right \rfloor \\
        &\ge \left\lfloor \frac{2^N}{2^{N-1}} \right\rfloor \\
        &=2
    \end{align*}
    This implies that the node \(v\in V\) containing \(T'_{N-2}\) will satisfy \(\ErrOverflow(v)\) at time \(t+2^N+N-2\), and will become a leader.

    We now come back to the case \(\Trains^1(\gamma_{t+2^N})= \emptyset\). Since there are no leaders, no marked trains will be created. Also, there is no marked wagon \(B\) with \(B.\id = 0\) in the system, otherwise there would be at least one marked wagon without a correct successor, which would create a new leader in the next round thanks to \(\ErrSuccessor\). Thus, no marked train can cut an unmarked train. This implies that the above analysis for the case \(\Trains^1(\gamma_{t+2^N})\ne \emptyset\) works in the exact same way for unmarked trains.
\end{proof}

The next lemma states that, when leaders do not emit marked wagons, all marked wagons eventually disappear.

\begin{lemma}
    \label{lem:marked-disparition}
    Consider an execution \(\execution\) and let \(t\in \bbN\) be a time such that no marked wagon is emitted by a leader between \(\gamma_t\) and \(\gamma_{t+2^{N}+2N-2}\). Then there is no marked wagon at time \({t+2^{N}+2N-2}\).
\end{lemma}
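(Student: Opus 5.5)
The plan is to mirror the proof of \Cref{lem:leader-creation}, replacing ``no leaders'' by ``no leader emits marked wagons''. We split the time window into two phases. In the first phase, of length $2^N$, the minimum value of marked trains is pushed up. In the second phase, of length $2N-2$, any marked material that survived is flushed out, either by an overflow or by running out of successors.

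\emph{Phase 1.} Since no leader emits a marked wagon in $[t, t+2^N+2N-2]$, the hypothesis of \Cref{lem:train-incr} holds with $i=1$ at every step in the window. Iterating it $2^N$ times gives
\[
\min_{T\in\Trains^1(\gamma_{t+2^N})}\Value(T)\ge 2^N,
\]
whenever $\Trains^1(\gamma_{t+2^N})\neq\emptyset$. Two remarks are needed:
\begin{itemize}
\item The lemma is only informative while $\Trains^1$ is nonempty.
\item Once $\Trains^1$ becomes empty it stays empty. A new complete marked train can only arise from a leader emitting a head with flag $1$, as in the backward reconstruction in the proof of \Cref{lem:train-incr}: a complete train at time $s+1$ comes from a complete train at time $s$.
\end{itemize}

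\emph{Phase 2.} We show that every marked wagon present at time $t+2^N$ disappears within $2N-2$ further rounds. Marked wagons that do not belong to any complete train are \emph{partial} marked trains (a tail without a head, or a head whose continuation was cut). Their cleaning relies on the speed asymmetry. A node whose $L$ station expects a marked successor ($\SuccIsMarked$ with $v.L.\id\ne N-1$) but has none has empty $\subc^1$. It then triggers $\ErrSuccessor$ and becomes a leader via \NewLeader. This resets its wagons to unmarked ones with $\flag=0$, and since the node is now a leader it emits unmarked wagons only. So the tail of a broken marked train is destroyed from the back, at speed two stations per round, and therefore reaches the head within about $N$ rounds. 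Marked wagons with no predecessor simply stop being copied: $\subc^1(v)$ for a downstream node requires a matching $u.F$, so these wagons are erased as $v.F\gets\bot$ or overwritten. For the complete marked trains of value at least $2^N$, the computation in the proof of \Cref{lem:leader-creation} applies. After $N-2$ rounds, by iterating \Cref{claim:train-incr:advance-divide}, the pair $T'_{N-2},T'_{N-1}$ has $T'_{N-2}.\carry+T'_{N-1}.\bit\ge 2$. Hence $\ErrOverflow$ fires and the node holding $T'_{N-2}$ becomes an unmarked-emitting leader, which breaks the train. What remains is a headless marked fragment, cleaned as above within another $N$ rounds. This totals $2^N+(N-2)+N=2^N+2N-2$.

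The main obstacle is making the cleaning argument of Phase 2 rigorous. Concretely, we need an invariant that counts, for every maximal chain of marked wagons, how far it is from its last valid predecessor, and we must show this potential strictly decreases by enough each round. The difficulty is that a marked fragment may branch along a BFS-like spread, and several fragments may interleave. I would first try defining, for each marked station, the length of the longest backward chain of correct marked predecessors. I would then prove this length drops by at least one per round once no complete marked train exists, using $\err_3$ and \SuccIsMarked. This gives extinction within $N$ rounds, and the constants then sum to the stated bound.
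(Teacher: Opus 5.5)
\textbf{Verdict: there is a genuine gap.} Your architecture matches the paper's. \Cref{lem:train-incr} pushes every complete marked train to value at least \(2^N\) by time \(t+2^N\), the overflow fires \(N-2\) rounds later, and a back-to-front cleaning finishes. But Phase~2, where the substance of the lemma lies, is left unproved, and two concrete pieces are missing.

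\textbf{First gap: no time at which complete marked trains are gone.} Your cleaning argument is conditioned on ``no complete marked train exists'', yet you never show this holds at a specific time. Arguing forward that each complete train present at \(t+2^N\) is broken at the station of its head does not rule out marked wagons of index \(N-1\) reaching other stations along other branches of the spread, or surviving in your partial fragments. The paper argues backward, and the argument has three steps.
\begin{enumerate}
\item Suppose some station \(X\) holds a marked wagon of index \(N-1\) at time \(t+2^N+N-1\).
\item A non-leader \(L\) station accepts a marked non-head wagon only when it already holds the marked wagon of preceding index, and an \(F\) station is tied to its own \(L\) station by the absence of \(\err_2,\err_3\). Hence \(X\) held indices \(N-2,\dots,0\) during the previous \(N-1\) rounds, and was the head of a complete marked train at time \(t+2^N\).
\item That train has value at least \(2^N\), so the node of \(X\) itself satisfies \(\ErrOverflow\) at time \(t+2^N+N-2\) and is reset, a contradiction.
\end{enumerate}
Your Phase~1 remark on backward reconstruction is the right ingredient, but it has to be used here.

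\textbf{Second gap: your proposed quantity does not decrease station by station.} Take a station that receives a marked head (index \(0\)), for instance one that held a marked wagon of index \(N-1\), whose chain has length \(0\). It jumps to a chain as long as the one the head carried a round earlier. A global quantity is needed. The paper uses the maximum index \(M\) of a marked wagon. Once \(M<N-1\), it drops by one per round, by a two-case check:
\begin{itemize}
\item if the index-\(M\) wagon would enter \(v.F\) from \(v.L\), then \(v.L\) has no correct next marked wagon, and \(\ErrSuccessor(v)\) resets \(v\);
\item if it would enter \(v.L\) from \(u.F\), then local consistency at \(u\) makes \(u.L\) a marked wagon of index \(M+1\), contradicting maximality.
\end{itemize}
This handles tail-less, headless, branching and interleaved fragments at once, so your classification of fragments is unnecessary. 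Your idea can be salvaged by taking the maximum chain length over all marked stations. A chain of length \(m<N-1\) at time \(s+1\) traces back to one of length at least \(m+1\) at time \(s\), using the same \(\ErrSuccessor\)/local-consistency dichotomy.

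\textbf{Counting.} The first time without complete marked trains is \(t+2^N+N-1\), since the overflowing node is reset only one round after detection. So you need exactly \(N-1\) cleaning rounds, not ``about \(N\)'', to reach \(t+2^N+2N-2\). Also, what survives the overflow is the front of the train with its tail cut, not a headless fragment.
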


As for the previous lemma, the proof relies on the fact that, by \Cref{lem:train-incr}, the values of the trains eventually reach \(2^N\), at which point the overflow triggers \(\ErrOverflow\). However, \(\ErrOverflow\) only eliminates the last wagon of the train: we need one additional step to show that the rest of the marked wagons are progressively eliminated by lack of a correct next wagon, ie. \(\ErrSuccessor\). These two steps correspond to the two claims of the proof below.

\begin{proof}[Proof of \cref{lem:marked-disparition}]
    Let \(\execution\) and \(t\in\bbN\) as in the statement of the lemma.

    \begin{claim}
        \label{claim:marked-disparition:overflow}
        There is no marked wagon \(B\) satisfying \(B.\id = N-1\) at time \({t+2^N+N-1}\).
    \end{claim}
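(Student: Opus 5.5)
We prove \Cref{claim:marked-disparition:overflow} by contradiction, following the overflow argument of \Cref{lem:leader-creation} but applied to marked trains only. Suppose some station holds a marked wagon \(B\) with \(B.\id = N-1\) at time \(t+2^N+N-1\). The first step traces this wagon backwards. A marked wagon with index \(N-1\) that is not freshly emitted can only appear in a non-leader station. That station obtained it by \(\Add(\cdot,u.F)\) from a neighbor \(u\in\subc^1\) carrying a marked wagon of index \(N-1\), or, for an \(F\)-station, from its own \(L\)-station. Since no leader emits marked wagons during the window, iterating this backward step yields a chain of stations, and hence a complete marked train \(T\in\Trains^1(\gamma_{t+2^N})\) whose evolution over the next \(N-1\) rounds produces \(B\). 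The backward step also uses that wagons of index \(<N-1\) in a marked train must have had a correct marked predecessor, by the definition of \(\subc^1\).

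The second step is to apply \Cref{lem:train-incr} with \(i=1\) on every round of \([t,t+2^N]\). Its hypothesis holds because no leader holds a marked wagon; we must check this includes the case where a freshly created leader reset its wagons to unmarked ones via \NewLeader. Iterating gives \(\Value(T)\ge 2^N\) for every \(T\in\Trains^1(\gamma_{t+2^N})\). If \(\Trains^1(\gamma_{t+2^N})=\emptyset\), the backward tracing already contradicts the existence of \(B\).

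The third step mirrors \Cref{lem:leader-creation}. Iterating \Cref{claim:train-incr:advance-divide} along the \(N-2\) rounds following \(t+2^N\), the station holding the wagon of index \(N-2\) in the descendant of \(T\) satisfies
\[
T'_{N-2}.\carry + T'_{N-1}.\bit \ge \left\lfloor \frac{\Value(T)}{2^{N-1}}\right\rfloor \ge 2 .
\]
Hence \(\ErrOverflow^L\) holds at that station if the \(N-2\) wagon is an \(L\)-station, and \(\ErrOverflow^F\) holds otherwise. The flag condition \(v.L.\flag=\SuccIsMarked(v)\) holds since the wagon is marked with index \(<N-1\). That node then becomes a leader via \NewLeader, which replaces its wagons by unmarked ones. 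So the marked wagon of index \(N-1\) is never created at the next round, and no successor of index \(N-1\) can arise from this train.

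The main obstacle is making the backward tracing rigorous. Each marked wagon of index \(N-1\) at time \(t+2^N+N-1\) must descend from a complete marked train at time \(t+2^N\), and every such train must be stopped by overflow exactly at round \(t+2^N+N-2\). The delicate points are the multiple candidate successors (the \(\argmax\) over \(w.F.\bit\) actually helps, since it only increases values), and stations that switch between \(F\) and \(L\) parity. We handle this by an induction over the \(N-1\) rounds, maintaining that every marked partial train ending in index \(j\) has value at least \(\lfloor 2^N/2^{j}\rfloor\) on its suffix.
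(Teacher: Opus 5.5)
Your proposal is correct and follows the same route as the paper. You trace the marked index-\(N-1\) wagon back to a marked train \(T\in\Trains^1(\gamma_{t+2^N})\) starting at that station, use \Cref{lem:train-incr} to get \(\Value(T)\ge 2^N\), and reuse the overflow computation of \Cref{lem:leader-creation} to show the station satisfies \ErrOverflow{} at round \(t+2^N+N-2\) and is reset by \NewLeader, which is a contradiction. The extra care you flag only makes explicit details the paper's short proof leaves implicit: building the train backward from the final wagon, the \(\argmax\) only helping the inequality, and checking the flag condition in \(\ErrOverflow^L\).
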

    \begin{proof}\triangleqed
        Assume by contradiction that there is a marked wagon \(B\) satisfying \(B.\id=N-1\) at time \({t+2^N+N-1}\). Then, the station containing \(B\) at time \(t+2^N+N-1\) contained at time \(t+2^N\) a marked wagon \(T_0\) which was the first wagon of a train \(T\in\Trains^1(\gamma_{t+2^N})\). Since no marked wagon is emitted, by \Cref{lem:train-incr} we have \(\Value(T) \ge 2^N\). By the same analysis as in the proof of \Cref{lem:leader-creation}, at round \(t+2^N+N-2\) the station will satisfy \(\ErrOverflow\), so it will become a leader at time \(t+2^N+N-1\). This contradicts the initial hypothesis that this same station contains a wagon satisfying \(B.\id = N-1\) at time \({t+2^N+N-2}\).
    \end{proof}

    \begin{claim}
        \label{claim:marked-disparition:induction}
        For all \(0 \le k \le N-1\), there is no marked wagon \(B\) satisfying \(B.\id \ge N-1-k\) at time \({t+2^N+N-1+k}\).
    \end{claim}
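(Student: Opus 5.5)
Induction on \(k\), with \Cref{claim:marked-disparition:overflow} as the base case \(k=0\). For the step, assume the statement for some \(k<N-1\): at time \(s = t+2^N+N-1+k\) there is no marked wagon with index \(\ge N-1-k\). We must show that at time \(s+1\) no station holds a marked wagon with index \(\ge N-2-k\).

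We trace, for each station, where a marked wagon it holds at time \(s+1\) could come from. Since by hypothesis no leader emits marked wagons in this window, a marked wagon appears in a station only through \(\WagonUpdate\). There are three possibilities.
\begin{itemize}
\item An \(L\)-station receives \(\Add(v.L,u.F)\) with \(u \in \subc^1(v)\), so \(v.L_{s+1}\) has the index of some marked \(u.F\) at time \(s\).
\item An \(F\)-station receives \(\Add(v.F,v.L)\), so \(v.F_{s+1}\) has the index of \(v.L_s\). Here we must check which flag \(\Add\) preserves: \(\Add\) only changes bit and carry, so the new \(F\) wagon has the index of \(v.L_s\) and its flag.
\item A node becomes a leader through \(\err\) or \(\NewLeader\). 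This produces unmarked wagons (flag \(0\)), so it is harmless.
\end{itemize}
In every case, a marked wagon of index \(j\) at time \(s+1\) was present as a marked wagon of the same index \(j\) somewhere at time \(s\). By the induction hypothesis this forces \(j \le N-2-k\). This bound alone does not exclude index exactly \(N-2-k\), which is what we need to rule out.

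To exclude index \(N-2-k\), the plan is to use that wagons advance. A marked wagon of index \(j\) at time \(s+1\) in an \(L\)-station was obtained from a \(u.F\) of index \(j\). That \(u.F\) was itself obtained from an \(L\) of the same index one round earlier. In contrast, the wagon whose index was \(N-2-k\) at time \(s\) in an \(L\)-station must have had, as its correct successor, a marked wagon of index \(N-1-k\) (its flag is \(1\) and its index is \(\ne N-1\)). No such wagon exists at time \(s\), so \(\ErrSuccessor\) fires at that node and it becomes an unmarked leader at time \(s+1\).

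So the refined argument goes by station type. A marked \(L\)-wagon of index \(N-2-k\) at time \(s\) is destroyed, because \(\err\) fires and \(\NewLeader\) resets it. A marked \(F\)-wagon of index \(N-2-k\) at time \(s\) is paired at its node with \(v.L\) of index \(N-1-k\). By the induction hypothesis that \(v.L\) is unmarked, so by the absence of \(\err_3\) the \(F\)-wagon also has flag \(0\), unless the index is \(0\). Hence marked wagons of index \(N-2-k\) at time \(s\) cannot persist or be copied into time \(s+1\).

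The main obstacle is the \(L\)-station case in which \(u.F\) had index \(N-2-k\) at time \(s\) and is copied into \(v.L\). We must show that such a \(u.F\) was not marked. If it had been marked, its node's \(u.L\) of index \(N-1-k\) would be marked as well by the \(\err_3\) consistency condition. That contradicts the induction hypothesis, with the boundary case \(N-2-k=0\) handled separately. In that boundary case, a marked head at time \(s\) would have \(u.L\) of index \(1\) either unmarked (so \(u\)'s first wagon is a fresh head) or absent. The plan is to argue that a fresh marked head at an \(F\)-station must come from an \(L\) of index \(0\) at time \(s-1\), which is excluded at \(k-1\) by the same reasoning. I expect the \(\err_3\) exception at index \(0\) to require the most care, and possibly a strengthening of the hypothesis to cover both station types separately.
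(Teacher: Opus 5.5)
Your argument is essentially the paper's proof. The paper splits the inductive step into the same two cases:
\begin{itemize}
\item A marked wagon entering an \(F\)-station must come from a marked \(v.L\) of index \(N-2-k\). The marked successor it requires, of index \(N-1-k\), is absent by the induction hypothesis, so \(\ErrSuccessor(v)\) fires and \(v\) is reset.
\item A marked wagon entering an \(L\)-station must come from a marked \(u.F\) of index \(N-2-k\). Its partner \(u.L\) would then be marked with index \(N-1-k\), contradicting the hypothesis.
\end{itemize}

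The only thing to correct is your boundary worry, which comes from a misreading of \(\err_3\). The exemption in \(\err_3\) is \(v.L.\id = 0\), meaning an \(F\)-wagon of index \(N-1\) followed by the head of a new train. It is not \(v.F.\id = 0\). In the step, \(k \le N-2\), so the partner wagon has \(u.L.\id = N-1-k \ge 1\). Then \(\neg\err_3(u)\) forces \(u.L.\flag = u.F.\flag = 1\) even when \(N-2-k = 0\). The same remark removes the ``unless the index is \(0\)'' caveat in your \(F\)-wagon paragraph.

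So the uniform argument already covers the case \(N-2-k=0\), and no separate treatment or strengthening of the hypothesis is needed. This matters, because the fallback you sketch would not work. The hypothesis at \(k-1\) only excludes marked wagons of index \(\ge N-k \ge 2\) at time \(s-1\), so it cannot rule out an index-\(0\) wagon there.
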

    \begin{proof}\triangleqed
        By induction on \(k\). The case \(k=0\) is exactly \Cref{claim:marked-disparition:overflow}. Let \(k \geq 0\), we prove the claim for \(k+1\).
        Since no leaders emit marked wagons, the only way for a marked wagon \(B\) with \(B.\id \ge N-1-(k+1)\) to be present at time \({t+2^N+N-1+(k+1)}\) is if, at time \({t+2^N+N-1+k}\), there were:
        \begin{itemize}
            \item a marked wagon \(B_0\) in the same station as \(B\), satisfying \(B_0.\id = B.\id-1\);
            \item a correct next (marked) wagon \(B_1\) in a neighboring station, satisfying \(B_1.\id = B.\id\).
        \end{itemize}
        By induction hypothesis, \(B.\id = B_1.\id < N-1-k\), thus \(B.\id = N-1-(k+1)\). There are two cases.
        \begin{itemize}
            \item There is a node \(v\in V\) such that, at time \({t+2^N+N-1+k}\), \(v.F = B_0\) and \(v.L = B_1\). But then by induction hypothesis there is no correct next wagon for \(B_1\), so \(v\) satisfies \(\ErrSuccessor(v)\) and becomes a leader at time \({t+2^N+N-1+(k+1)}\). This contradicts the fact that it contains a marked wagon \(B\).
            \item There is a node \(v \in V\) and a neighbor \(u \in N(v)\) such that, at time \({t+2^N+N-1+k}\), \(v.L = B_0\) and \(u.F = B_1\). Since \(B_1.\id = B.\id < N-1\), we have \(u.L.\flag = 1\) and \(u.L.\id = B_1.\id + 1 \ge N-1-k\). This contradicts the induction hypothesis.
        \end{itemize}
    \end{proof}

    The lemma is exactly \Cref{claim:marked-disparition:induction} for \(k=N-1\).
\end{proof}

Put together, the previous lemmas prove that, given the right conditions, the execution converges deterministically. It only remains to show that these conditions happen often enough. These is formalized in the following lemma.
 
\begin{lemma}
    Consider an execution \(\execution\) and let \(t \ge N\). Let $\tau = 2^{N+2} + N$.  With probability at least $e^{-2^4} 4^{-N}$, the following happens:
    \begin{enumerate}[(i)]
        \item No marked train are emitted in the rounds $t, \dots, t+\tau$;
        \item 
        There exists a time $\tau' < 2^N + 2N$ such that, in time $t+\tau + \tau'$, a single leader emits a marked train;
        \item No other marked trains are emitted in rounds $[t+\tau, t+\tau + \tau' + 2D]$.
    \end{enumerate}
    \label{lem:probability-success}
\end{lemma}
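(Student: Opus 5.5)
Fix the history up to time $t$ and argue on the fresh random bits used from time $t$ on. The key quantity is the per-phase emission probability. A leader $v$ emits a marked train at a round where $v.L.\id$ wraps from $N-1$ to $0$ and $v.\rand=1$. Between two such wraps, $v.\rand$ is reset to a fresh $\mathrm{Bernoulli}(1/4)$ and then multiplied by $N-1$ further independent $\mathrm{Bernoulli}(1/4)$ variables. Hence, conditionally on the past, each completed $N$-round phase of an established leader ends in a marked emission with probability exactly $4^{-N}$, independently across phases and across leaders.

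A new leader created by $\NewLeader$ at an arbitrary round also draws $v.\rand\gets X$ and starts at $\id=1$, so its first phase is slightly shorter. Its emission probability is then at most $4^{-(N-1)}$. This is why we need $t\ge N$ and why we lose only constant factors. A leader may also be eliminated or recreated, but every emission event requires a complete chain of $N-1$ or $N$ successes, so the bound survives.

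I will split the window into three pieces and lower bound each by a product of conditional probabilities.

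\emph{Step (i).} In rounds $t,\dots,t+\tau$, for each "leader-phase" the probability of no emission is at least $1-4^{1-N}$. The total number of leader-phases in that window is at most $n(\tau/N+2)$, since at most $n$ leaders exist at any time. However, leaders can be recreated repeatedly. To avoid overcounting, I count per node: each node has at most one live chain of $\rand$ multiplications at a time, and $\NewLeader$ restarts it. So each node ends at most $\tau/(N-1)+2$ chains of length at least $N-1$. The probability that none succeed is then at least
\[(1-4^{1-N})^{n(\tau/(N-1)+2)}\ge \exp\bigl(-2\cdot 4^{1-N} n (2^{N+2}+N)\bigr).\]
Since $n\le 2^{N-1}$, the exponent is $O(1)$, roughly $2^{4}$ after bounding.

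\emph{Step (ii).} By \Cref{lem:marked-disparition}, no marked wagons remain at time $t+\tau$, because $\tau\ge 2^N+2N-2$. By \Cref{lem:leader-creation}, there is a leader at some time in $[t+\tau,t+\tau+2^N+N)$. Take the first leader $v^*$ that completes a full phase, which it does within $N$ more rounds, so the wrap time $\tau'<2^N+2N$. With probability $4^{-N}$, that phase has $\rand=1$, and then $v^*$ emits a marked train at $t+\tau+\tau'$. To get "single leader" in (ii), simultaneously require that every other leader's chain ending at that same round fails, which is covered by the same product bound as in (i).

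\emph{Step (iii).} In $[t+\tau,t+\tau+\tau'+2D]$, all other chains, and $v^*$'s subsequent ones, must fail. This window has length $O(2^N+D)$ with $D<n\le 2^{N-1}$, so again the failure product is $e^{-O(1)}$.

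The three events concern disjoint sets of random draws, or are ordered so that conditioning is legitimate by the Markov property. Multiplying gives $4^{-N}$ times an $e^{-c}$ factor, and the arithmetic should fit $c\le 2^4$.

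The main obstacle is the bookkeeping in steps (i) and (iii). The set of leaders varies adversarially with the history and with eliminations, so the events "chain succeeds" are not over a fixed family. I would handle this by a coupling or an adaptive union bound. Assign each node an i.i.d. stream of $\mathrm{Bernoulli}(1/4)$ draws, and note that any marked emission at node $v$ in a window of length $\ell$ requires a run of at least $N-1$ consecutive ones in $v$'s stream within that window. The probability that $v$'s stream has such a run is at most $\ell\cdot 4^{1-N}$. These "no long run" events are independent across nodes and decreasing, and the desired success of $v^*$ is a specific run event. The probability $4^{-N}$ for that run can be lower bounded conditionally on no other runs occurring, since the other runs are on different nodes or in disjoint time intervals. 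The one subtle point is that the identity of $v^*$ and its phase are determined by the past, so I would choose $v^*$ as a stopping-time function of the history before its phase starts. That keeps the $4^{-N}$ fresh.
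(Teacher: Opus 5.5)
Your three-window decomposition matches the paper's, as does your use of \Cref{lem:marked-disparition} and \Cref{lem:leader-creation} to locate $v^*$ with $\tau'<2^N+2N$ and the $4^{-N}$ cost of its phase. Your observation that a fresh leader's first marked head has probability $4^{-(N-1)}$ is correct, and more precise than the paper's remark.

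The step that fails is your opening move: ``fix the history up to time $t$ and argue on the fresh random bits used from time $t$ on.'' Conditionally on that history, event (i) can have probability $0$. Suppose at time $t$ some leader has $v.\rand=1$ and $v.L.\id=N-1$, and no neighbor carries a marked head. Then it emits a marked head at round $t+1$ with certainty, and such histories occur with positive probability. More generally, every $\rand$ chain in progress at time $t$ already has some of its factors fixed. So your bound $1-4^{1-N}$ is false for the phases straddling $t$. Likewise, your claim that an emission in a window needs a run of ones \emph{within} that window is false for emissions in its first $N-1$ rounds.

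This is exactly what the hypothesis $t\ge N$ is for, as the paper notes. Every leader resets $v.\rand$ at least once every $N$ rounds (at a wrap or in \NewLeader), so for $t\ge N$ no chain contains the adversarial initial value. You must therefore leave those draws unconditioned. For instance, condition only on the non-$\rand$ part of the history; then the chains in progress at time $t$ are independent products of fresh draws. Your stated reason for $t\ge N$, the shortened first phase of new leaders, is not the relevant one; it only affects constants.

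Second, your stopping-time remark does not by itself justify multiplying the three bounds. The ``no other run'' events involve draws of other nodes both before and after the stopping time $S$ at which $v^*$ is chosen. Conditionally on $\mathcal F_S$, those nodes' in-progress chains may all be close to completion, so a pointwise Markov bound at $S$ does not give $e^{-O(1)}$.

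What closes this is an observation the paper also leaves implicit behind ``the events are all independent.'' Randomness enters the dynamics only through $v.\rand$ and the flag it gives to emitted heads. As long as the only marked head emitted is $v^*$'s designated one, every variable other than $\rand$ evolves deterministically from the non-$\rand$ configuration. Hence the leader sets, $v^*$, its phase and all phase boundaries are fixed in advance. The target event is then an intersection of fixed events on pairwise disjoint blocks of i.i.d.\ draws, so its probability is exactly the product of the per-phase probabilities.

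Finally, your arithmetic for (i) does not reach the stated constant. After replacing $n(\tau/(N-1)+2)$ by $n\tau$, the exponent is $16+N2^{2-N}>2^4$ before (iii) is even counted. Keeping the factor $1/(N-1)$ reduces it to about $16/(N-1)\le 4$, which leaves room for (iii).
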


\begin{proof}
From the description of the algorithm, and in particular of the  $\WagonUpdateLeader$ function in \cref{def:wagon-update-leader}, the trains are marked randomly by the leaders every $N$ rounds. In particular each leader, every $N$ rounds, creates a new train and marks it independently with probability $1/4^N$ (note that this is true only after \(t \ge N\), since the random variable \(v.\rand\) is initialized by the adversarial daemon, and not at random). For this reason, the events in $(i)$ is independent from the events in $(ii)$ and $(iii)$. 

Since there are at most $n$ leaders in the network, the event in $(i)$ happens with probability at least
\begin{align*}
  \left(1-\frac{1}{4^N}\right)^{n\tau} & \stackrel{(a)}{\geq} \exp\left(-\frac{2n\tau}{4^N}\right)
  \\ & \stackrel{(b)}{\geq} \exp \left(-\frac{2^3 n}{2^N}\right)
  \\ & \stackrel{(c)}{\geq}\exp\left(-2^3\right).
\end{align*}
where $(a)$ and $(c)$ follows since $N \geq \log n$ and for $n$ large enough, and $(b)$ follows since $\tau \leq 2^{N+2}+N$.

Then, after time $t+\tau$, we have no guarantees that there is a leader in the network. Let $\tau'$ be the first time a leader is in the network for $N$ consecutive rounds, and it is such that $v.F.\id = N$ (i.e. it will start emitting a new train). The leader needs to be in the network for $N$ rounds since, when it is created, it just send unmarked trains for the first $N$ rounds (see \cref{def:new-leader} for details). Note by the leader cannot be eliminated since there are no marked trains by \Cref{lem:marked-disparition}. By \Cref{lem:leader-creation}, we have that $\tau' < 2^N + 2N$. At time $t+\tau'$, the leader will start emitting a marked train with probability $1/4^N$. This ensures that the event $(ii)$ happens. In the same time, we have to show that no other leader emit marked trains in rounds $[t+\tau, t+\tau + \tau' +2D]$, and this will happen with probability at least
\begin{align*}
    \left(1-\frac{1}{4^N}\right)^{n(\tau' + 2D)} &\stackrel{(a)}{\geq } \exp \left(-\frac{2^{N}(2 \cdot 2^{N+1})}{4^N}\right) = \exp(-2^2).
    \end{align*}
where $(a)$ follows since $2D \leq 2n \leq 2^{N+1}$ and $n \leq 2^{N}$, and $\tau' \leq 2^{N+1}$.

Moreover, since the events are all independent, we have that $(i)$, $(ii)$ and $(iii)$ happen with probability at least $\exp(-2^4)4^{-N}$, which concludes the proof.
\end{proof}

We can finally prove the convergence of our algorithm.

\begin{lemma}
If $N \geq \log n$, our algorithm converges, with probability at least $1-1/n$, in $\cO(2^{3N} \log n)$ rounds.
\end{lemma}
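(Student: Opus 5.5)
The plan is to split time into consecutive windows of length \(W = \tau + 2^N + 2N + 2D + N\), where \(\tau = 2^{N+2}+N\) is as in \Cref{lem:probability-success}. Since \(2D \le 2n \le 2^{N+1}\), we have \(W = \cO(2^N)\). Within each window we apply \Cref{lem:probability-success} at its starting time \(t \ge N\). The per-window success probability is \(p = e^{-2^4}4^{-N}\), and it holds for \emph{every} initial configuration of the window. So we invoke the lemma conditionally on the past, i.e. on the configuration at the window's start. This gives, by the Markov property, that the number of windows until the first success is dominated by a geometric variable of parameter \(p\).

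Next we show that a successful window forces entry into \(\legitimate\). By (i), no marked wagon is emitted during \([t,t+\tau]\). Since \(\tau \ge 2^N+2N-2\), \Cref{lem:marked-disparition} gives that no marked wagon exists at time \(t+2^N+2N-2\). It also gives that none exists at the time \(t+\tau+\tau'-1\) just before \(v^*\) emits its marked train, because no leader emits marked wagons in between by (i) and (iii). Note that \Cref{lem:marked-disparition}, applied from \(t\), yields absence at a single time. Absence then persists as long as no leader emits a marked wagon, since marked wagons only arise from emission. By (ii), a single leader \(v^*\) starts a marked train at \(t+\tau+\tau'\). By (iii), no other leader emits a marked wagon during the next \(2D \ge 2\epsilon(v^*)\) rounds, and we take the window slightly longer, say \(2D+1\) rounds, to cover the \(+1\). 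The hypotheses of \Cref{lem:leg-grow} therefore hold with \(k = 2\epsilon(v^*)+1\), so \(\gamma_{t+\tau+\tau'+k} \in \legitimate\). By \Cref{lem:leg-closed}, the execution stays in \(\legitimate\) with the same leader forever after.

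For the probability bound, the probability of no success in \(m\) windows is at most \((1-p)^m \le e^{-pm}\). Choosing \(m = p^{-1}\ln n = e^{16}4^N \ln n\) makes this at most \(1/n\). The total time is then \(N + mW = \cO(4^N \cdot 2^N \log n) = \cO(2^{3N}\log n)\). The first \(N\) rounds are needed because \(v.\rand\) is adversarially initialized; they are absorbed into this bound, and so is the one initial round during which local errors vanish.

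The main obstacle is the independence across windows. \Cref{lem:probability-success} is stated for an arbitrary execution from time \(t\). I would make explicit that its bound holds conditionally on \(\gamma_t\), because the marking coins \(v.\rand\) used after \(t\) are fresh. The window must therefore be long enough that these coins start fresh at each window boundary: each leader redraws \(v.\rand\) at every train head, so coins drawn before \(t\) affect only the first \(N\) rounds of a window. A minor point to check is that \(\tau'\) is the \emph{first} suitable time. This is fine because \(\tau'<2^N+2N\) deterministically once no marked wagons remain, by \Cref{lem:leader-creation}.
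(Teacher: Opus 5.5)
Your proof is correct and follows the paper's route: disjoint attempts of length $\cO(2^N)$, each reaching $\legitimate$ with probability at least $e^{-2^4}4^{-N}$ via \Cref{lem:probability-success}, \Cref{lem:marked-disparition} and \Cref{lem:leg-grow}, repeated $\cO(4^N\log n)$ times. You are more careful than the paper in two places: you invoke \Cref{lem:leg-closed} for closure, and you condition on the past instead of tacitly assuming independence across attempts.

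The one thing to fix is where the success event starts. Conditioning on $\gamma_t$ fixes the current values $v.\rand$, so applying \Cref{lem:probability-success} at the window's start does not give a bound valid "for every initial configuration." For instance, a leader with $v.\rand=1$ and $v.L.\id=N-1$ emits a marked head at time $t+1$. The conditional bound therefore holds only for the event started $N$ rounds into the window. Likewise, \Cref{lem:marked-disparition} should be applied another $N$ rounds later, since (i) forbids marked trains rather than marked wagons. Your extra $N$ in $W$ and the slack in $\tau$ already cover both delays.
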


\begin{proof}
Consider a time $t \geq N$. Assume the situation described in \Cref{lem:probability-success} happens at time \(t\). By \Cref{lem:marked-disparition}, there is no marked wagon at time \(t+\tau+\tau'-1\). Then, by \Cref{lem:leg-grow}, the system reaches a legitimate configuration at a time \(t' \le t + \tau+ \tau'+2D+1\) (indeed, for all \(v^*\in V\) we have \(\epsilon(v^*) \le D\)).
Thus, \Cref{lem:probability-success} implies that, with probability at least $e^{-2^4}4^{-N}$, we reach a legitimate configuration in the next $2^{N+3} + 3N + 2D + 1 \leq 2^{N+8}$ rounds.

Therefore, at each time $t$, we have an attempt to convergence that is successful with probability at least $e^{-2^4}4^{-N}$, where the attempts is taking $2^{N+8}$ rounds. We hence have that, after $ T = 4^N e^{2^4} \log n$ attempts, we reach convergence w.h.p., in particular we have that the first $T$ attempts are not successful with probability at most
\begin{align*}
   \left( 1-\frac{e^{-2^4}}{4^N}\right)^{T} \leq  \frac{1}{n}.
\end{align*}
Since each attempts is taking $2^{N+8}$ rounds, we have convergence in $4^N e^{2^4} \cdot 2^{N+8} \log n= \cO(2^{3N}\log n)$ rounds, proving the lemma.
\end{proof}

\section{Conclusions}
\label{sec:conclusion}
In this paper, we propose a self-stabilizing leader election algorithm that uses $\cO(\log \log n)$ bits of memory per node and stabilizes within a polynomial number of rounds. Our solution assumes a synchronous scheduler and relies on the common knowledge of a parameter $N = \Theta(\log n)$.

We cannot claim space optimality, since the currently known lower bounds on the memory required per node to solve leader election are $\Omega(\log \log n)$ bits in the deterministic setting and, in the probabilistic setting, $\Omega(\log \log n)$ for some values of $n$ under an unfair scheduler. It therefore remains open whether $\Omega(\log \log n)$ is also a lower bound under a synchronous
scheduler in anonymous networks, or whether algorithms using less memory can be designed.

Other open questions include whether algorithms using $\cO(\log \log n)$ bits of
memory per node can be constructed for asynchronous schedulers under any form
of fairness, and whether the assumption of common knowledge of $N$ can be
relaxed -- for example, by allowing nodes to possess differing approximate
estimates of $n$.

Finally, we believe that our information train technique could be applied to other related problems in the self-stabilizing setting while maintaining sublogarithmic memory usage. For instance, it may enable the design of deterministic leader election algorithms in ID-based networks, or be adapted to other problems, such as the spanning-tree construction or the clock synchronization problem.

\bibliographystyle{plain}
\bibliography{biblio}
\end{document}